\definecolor{ForestGreen}{rgb}{0.1333,0.5451,0.1333}
\definecolor{DarkRed}{rgb}{0.65,0,0}
\definecolor{Red}{rgb}{1,0,0}
\newcommand{\ranking}{\textsc{ranking}\xspace}
\newcommand{\balance}{\textsc{balance}\xspace}
\newcommand{\wl}{\textsc{water level}\xspace}
\newcommand{\wf}{\textsc{water-filling}\xspace}
\newcommand{\balancedranking}{\textsc{balanced-ranking}\xspace}
\newcommand{\eagerwf}{\textsc{eager water-filling}\xspace}
\newcommand{\reranking}{\textsc{periodic reranking}\xspace}
\newcommand{\calD}{\mathcal{D}}
\newcommand{\E}{\mathbb{E}}
\newtheorem{theorem}{Theorem}[section]
\newtheorem{lemma}[theorem]{Lemma}
\title{Online Matching: A Brief Survey}
\date{\vspace{-1cm}}
\author[1]{Zhiyi Huang}
\author[2]{Zhihao Gavin Tang}
\author[3]{David Wajc}
\affil[1]{University of Hong Kong}
\affil[2]{Shanghai University of Finance and Economics}
\affil[3]{Technion --- Israel Institute of Technology}
\begin{document}

\maketitle

\begin{abstract}
	Matching, capturing allocation of items to unit-demand buyers, or tasks to workers, or pairs of collaborators, is a central problem in economics. 
	Indeed, the growing prevalence of matching-based markets, many of which online in nature, has motivated much research in economics, operations research, computer science, and their intersection.
	This brief survey is meant as an introduction to the area of online matching, with an emphasis on recent trends, both technical and conceptual.
\end{abstract}

\section{Introduction}
Matching theory lies at the heart of Economics, Computation and their intersection. 
Matching markets have played increasingly dominant roles in the world economy, both on the micro and macro scale. 
Such markets arise in domains as varied as Internet advertising, crowdsourcing of work and transportation, and organ transplantation.
The repeated interactions and lack of certainty about future participants (buyers, sellers, etc.)~result in these industries' dynamics being
prime examples of \emph{online} matching markets. 
See \cite{echenique2023online} for detailed discussions.
In this brief survey, we focus on three main aspects of recent developments in the study of such online matching and allocation problems.

Particularly prevalent are online \emph{bipartite} matching markets. Often, agents on one side of the market arrive up front, while agents on the other side are revealed sequentially, to be matched (or not) immediately and irrevocably.
For example, this dynamic abstracts the Internet advertising marketplace, with advertisers known up front and user queries (ad slots) revealed online. This motivates the study of online bipartite matching \cite{karp1990optimal}, and its generalization to weighted settings \cite{aggarwal2011online,feldman2009online} and the AdWords problems \cite{mehta2007adwords}. 
For more on the motivation from the Internet advertising application, see the influential survey on online bipartite matching and ad allocation by \cite{mehta2013online}, and the more recent survey by \cite{devanur2022online}.
We outline recent developments on these online bipartite matching and online ad allocation problems, in \S\ref{sec:central}.

There are, of course, many aspects of modern online matching markets that are not bipartite, or that allow for agents on either side of bipartite matching markets to enter in an interleaved order. 
Similarly, while classic online matching models consider agents as having left the market after matching, in many crowdsourcing marketplaces (e.g., DoorDash, TaskRabbit, Uber/Lyft, etc.) freelance workers return to the market after being assigned a task (i.e., being matched) and completing their tasks.
We discuss recent progress on modeling and addressing such problems, in \S\ref{sec:beyond-ads}.

The above-mentioned sections focus on the robust, but somewhat pessimistic, modeling choice of adversarial inputs and arrival orders. 
A less pessimistic model is that of random-order arrivals (``secretary models''), where the input is generated adversarially, but permuted by nature.
Another modeling choice, motivated by the abundance of historical data from which to learn trends, is to posit a stochastic arrival model with parameters known to the algorithm. 
Here one can compare with either the best offline algorithm (computed by a ``prophet'' who knows the future) or the best online algorithm (computed by a ``philosopher'' who has enough time to think/compute).
We discuss progress on online matching for such stochastic models, and their connection to mechanism design, in \S\ref{sec:prophets}.

Finally, we give a brief glimpse of some overarching techniques that have played key recurring roles in the aforementioned recent developments, in \S\ref{sec:techniques}.
We illustrate some of the ideas with particularly short (and in our opinion, quite teachable) examples of some of the basic techniques in this space, in \S\ref{sec:lesson}.

\section{Online matching and ad allocation}\label{sec:central}

Researchers have made much progress in the past decade on the aforementioned online bipartite matching and ad allocation problems.

The most general problem along this line is \emph{online submodular welfare maximization}.
Consider a set of offline agents (advertisers), and a set of online items (impressions) that arrive one at a time.
Each agent $a$ has a submodular value function $v_a$ over subsets of items;
the algorithm can evaluate $v_a(S)$ for any subset $S$ of the arrived items.
On arrival of an online item, the algorithm must allocate it to an agent immediately and irrevocably.
The basic benchmark is the greedy algorithm that allocates each impression to maximize the immediate increase in social welfare, which is $\nicefrac12$-competitive.
For the general problem, this ratio is optimal for polynomial-time algorithms under standard complexity-theoretic assumptions \cite{kapralov2013online}. Most research has therefore focused on special cases of interest of this problem.

The \emph{(unweighted) online bipartite matching} problem is the special case when each agent either likes or dislikes an item, and is willing to pay $\$1$ to get any one item they like: 
the value $v_a(S)$ equals $1$ if agent $a$ likes at least one item in $S$, and is $0$ otherwise.
Further, when a subset of items $S$ is allocated to agent $a$, we can interpret this as matching $a$ to any one item that $a$ likes in $S$ (e.g., the first one).

In online advertising, some advertisers may be able to pay more than others for an impression that they like.
This can be captured by the \emph{vertex-weighted} generalization of online bipartite matching, where the value $v_a(S)$ is agent $a$'s weight $w_a$ if agent $a$ likes at least one item in $S$, and is $0$ otherwise.

More generally, the same advertiser may have different values for different impressions, e.g., depending on the users' cookies and other information.
This motivates considering edge weights instead of vertex weights; 
the advertiser only pays for one impression like in the unweighted and vertex-weighted case.\footnote{Higher capacity can be simulated by creating multiple offline vertices per advertiser.}
This is the \emph{display ads} problem introduced by \cite{feldman2009online2}, where $v_a(S)$ equals the largest edge weight $w_{ai}$ among items $i \in S$. 

Last but not least, some platforms let advertisers set a daily budget rather than a limit on the number of impressions.
Given the allocation of impressions in a day,  each advertiser $a$ pays either the sum of its values for impressions it gets or its daily budget $B_a$, whichever is smaller;
in other words, $v_a(S) = \min \big\{ \sum_{i \in S} w_{ai}, B_a \big\}$.
Here, the weight $w_{ai}$ is often referred to as agent $a$'s \emph{bid} for impression $i$.
This is the \emph{AdWords} problem introduced by \cite{mehta2007adwords}.

Given the uncertainty over future items, when we decide how to allocate an online item, we do not want to put all our eggs in the same basket.
It is easier to implement this old wisdom when the item is divisible (alternatively, if each agent has a large basket that can take many items).
Imagine that each online item carries one litre of water (a divisible egg);
each offline agent has a bucket (basket) of capacity one litre.
The algorithm distributes an online item's water to its neighbors, where the amount of water going to each neighbor represents the fraction of the item allocated to the agent.
The \balance algorithm (a.k.a., \wl or \wf \cite{azar2006maximizing}) lets the water flow to the least loaded bucket (the basket with the least amount of eggs);
if there are multiple least loaded buckets, the water flows to them at an equal rate.
This algorithm and its generalizations achieve the optimal $1-\nicefrac{1}{e}$ competitive ratio in all the mentioned special cases of online submodular welfare maximization, including unweighted matching~\cite{kalyanasundaram2000optimal}, vertex-weighted matching~\cite{buchbinder2007online},
display ads~\cite{feldman2009online2}, and AdWords~\cite{mehta2007adwords}.

In the original problems where items are indivisible, we can distribute the risk through randomized decisions.
However, making independent random decisions in each round is insufficient for getting a competitive ratio better than $\nicefrac12$, the baseline set by the greedy algorithm \cite{karp1990optimal}.
In the same paper that \cite{karp1990optimal} introduced the online bipartite matching problem, they also gave an elegant \ranking algorithm achieving the optimal $1-\nicefrac{1}{e}$ competitive ratio, later generalized by \cite{aggarwal2011online} to vertex-weighted problem.
The algorithm can be viewed as letting each offline vertex independently set a random price, and having each online vertex choose the lowest price offered by its unmatched neighbors.
See \S\ref{sec:techniques} for a further discussion on this economic interpretation of \ranking and other online matching algorithms.

\subsection{Breaking the $\nicefrac{1}{2}$ Barrier in Longstanding Open Problems}

Recall that for online (vertex-weighted) bipartite matching \ranking achieves an optimal competitive ratio, and in particular breaks the barrier of $\nicefrac{1}{2}$. For display ads and AdWords, however, finding an online algorithm strictly better than the $\nicefrac12$-competitive greedy algorithm had remained elusive for more than a decade.
Fundamentally new ideas seemed 
 necessary because a critical invariant in the analysis of \ranking fails to hold in these two problems.

In 2020, the $\nicefrac12$ barrier was broken for both problems using a new technique called Online Correlated Selection (OCS); see \S\ref{sec:ocs} for a further discussion on this technique.
\cite{fahrbach2020edge} introduced the concept of OCS and gave a $0.508$-competitive algorithm for display ads.
\cite{huang2020adwords} modified the definition of OCS and applied it to the AdWords problem, and as a result, obtained a $0.501$-competitive algorithm.
The OCS technique has then been improved in a series of works by \cite{shin2021making}, \cite{gao2021improved}, and \cite{blanc2021multiway}.
The state-of-the-art competitive ratio for display ads is $0.536$, given by a multi-way OCS algorithm by \cite{blanc2021multiway}.

Despite the aforementioned progress, we remark that there is no known evidence that the $(1-\nicefrac{1}{e})$-competitive ratio cannot be achieved in display ads and AdWords.
Hence, closing the gaps between the upper and lower bounds for these two problems remains an important open problem.

For the general online submodular welfare maximization problem, 
we recall that the simple greedy algorithm is $\nicefrac12$ competitive, and (barring surprising developments in complexity theory) no polynomial-time online algorithms can do better \cite{kapralov2013online}.
That being said, this impossibility relied on the computational hardness of maximizing a submodular function.
It would be interesting to explore online algorithms with unlimited computational capacity, because practical heuristics can often solve these optimization problems better than the worst-case approximation ratio promises, and positive results along this line may point to other special submodular functions that are computationally tractable.

\begin{table}[ht]
	\centering
	\begin{tabular}{lcc}
		\toprule
		& Fractional/Divisible Relaxation & Original Problem \\
		\midrule
		Unweighted & $1-\nicefrac{1}{e}$~\cite{kalyanasundaram2000optimal} & $1-\nicefrac{1}{e}$~\cite{karp1990optimal} \\
		Vertex-Weighted & $1-\nicefrac{1}{e}$~\cite[Section 5]{buchbinder2007online} & $1-\nicefrac{1}{e}$~\cite{aggarwal2011online} \\
		Display Ads & $1-\nicefrac{1}{e}$~\cite{feldman2009online2} & $\bm{0.536}$~\cite{blanc2021multiway} \\
		AdWords & $1-\nicefrac{1}{e}$~\cite{mehta2007adwords} & $\bm{0.501}$~\cite{huang2020adwords} \\
		\bottomrule
	\end{tabular}	
	\caption{State-of-the-art for central online bipartite matching \& allocation problems.}
	\label{table:central}
\end{table}

\subsection{Stochastic Rewards and Oblivious Budget}

Many online advertising platforms adopt the \emph{pay-per-click} model.
In this model, an advertiser pays each time a user clicks on its advertisement.
Since these platforms cannot control the user's behavior, they resort to the next best option:
modeling a user's behavior stochastically, and estimating the probability that the user clicks the advertisement, known as the \emph{click-through-rate} (CTR).
As a result of the users' stochastic behavior, the platform's revenue from assigning an impression to an advertiser is also stochastic.

\cite{mehta2012online} introduced online matching with stochastic rewards.
They analyzed both the \ranking algorithm and a variant of the \balance algorithm, and showed competitive ratios better than $\nicefrac12$ for uniform CTRs.
That is, the CTR is either $p$ or $0$ (if the advertiser is not interested in this impression).

Progress made in the past decade on stochastic rewards is threefold.
First, \cite{mehta2014online} gave the first algorithm that breaks the $\nicefrac12$ barrier for non-uniform but sufficiently small CTRs.
On one hand, small CTRs are arguably the most relevant case in practice because most keywords' CTRs are less than $10\%$. 
On the other hand, $10\%$ or even $1\%$ is larger than the assumption made by \cite{mehta2014online} and its follow-up works.
Hence, it remains an important open problem to design better online algorithms for less restrictive CTRs.

The second line of improvements comes from combining the online primal-dual framework and a more expressive linear program for the problem.
This new analysis method gives a better understanding of classical algorithms \balance~\cite{huang2020online} and \ranking~\cite{huang2023online} in the presence of stochastic rewards.

Researchers have also tried to gain new insight by considering a weaker clairvoyant benchmark.
\cite{goyal2023online} showed that against the weaker benchmark, \ranking achieves the optimal $1-\nicefrac{1}{e}$ competitive ratio for uniform CTRs.
They also analyzed \balance for small CTRs, and obtained a ratio better than the aforementioned state-of-the-art against the offline optimum benchmark.
The latter result was later improved by \cite{huang2023online} to $0.611$.

Last but not least, stochastic rewards are closely related to budget-oblivious algorithms for AdWords, i.e., algorithms which do not know an agent's budget until the moment it is depleted.
By a reduction by \cite{mehta2014online}, a competitive online algorithm for the latter model would yield the same competitive ratio in the former model (but not vice versa).
Again, the greedy algorithm is a $\nicefrac{1}{2}$-competitive budget-oblivious algorithm for this problem.
The survey by \cite{mehta2013online} listed finding a better budget-oblivious algorithm as an open problem.
\cite{vazirani2023towards} suggested a variant of \ranking as a candidate algorithm. \cite{liang2023perturbation} showed that no variant of this algorithm is $(1-\nicefrac{1}{e})$-competitive.
Finally, \cite{udwani2023adwords} proved that the candidate is at least $0.508$-competitive, and a variant of this algorithm is $0.522$-competitive, both under the small-bids assumption, whereby the bids $w_{ai}$'s are small compared to the agent's budget $B_a$.

\section{Beyond Online Bipartite Matching and Ad Allocation}\label{sec:beyond-ads}

The preceding online matching models, largely motivated by online advertising, crucially rely on the assumption that one side of the bipartite graph is fixed and known upfront. This prevents the theory of online matching being applied to other modern applications, including ride-hailing, ride-sharing, rental services, etc.

In this section, we discuss generalizations of classic online bipartite matching. The first two
 generalizations are motivated by ride-hailing and ride-sharing, that allow all vertices to arrive online and allow general (non-bipartite) graphs. The third is somewhat theoretical in nature, but is the most general problem. The last generalization is motivated by rental services and freelance labor markets, and so captures the reusability of resources.

\begin{table}[ht]
	\centering
	\begin{tabular}{lcc}
		\toprule
		& Fractional Relaxation & Original Problem \\
		\midrule
		Fully Online & $0.6$~\cite{tang2022improved} & $0.569$~\cite{huang2020fully2} \\
		General Vertex Arrival & $0.526$~\cite{wang2015two} & $\nicefrac{1}{2}+\Omega(1)$~\cite{gamlath2019online} \\
		Edge Arrival & {$\nicefrac{1}{2}$~\cite{gamlath2019online}} & {$\nicefrac{1}{2}$~\cite{gamlath2019online}}\\
		\multirow{2}{*}{Reusable Resources} & $1-\nicefrac{1}{e}$~\cite{goyal2021reusable} & \multirow{2}{*}{$0.589$~\cite{delong2023reusable}} \\
		& $1-\nicefrac{1}{e}$~\cite{feng2021reusable} & \\
		\bottomrule
	\end{tabular}	
	\caption{State-of-the-art for online matching problems beyond bipartite matching \& ad allocation.}
	\label{table:beyond-ads}
\end{table}

\subsection{Fully Online Model: Vertices with Arrivals and Deadlines}\label{sec:fully} 
In online ride-hailing platforms (e.g., Uber, Lyft, DiDi), ride requests are submitted to the platform in an online fashion and are active in the system for a few minutes. The platform assigns each request to a currently available taxi (or self-employed driver). Requests and taxis can be modeled as vertices in a bipartite graph with edges between compatible ride requests and taxis. 
This is an online bipartite matching problem but does not fit into the classic model, since all vertices (both the requests and the taxis) arrive online. Similarly, ride-sharing platforms, which match ride requests (pairing up riders) are naturally modeled as an online matching problem on \emph{general (non-bipartite) graphs}.

\cite{huang2020fully} introduced the fully online matching model to capture the above scenarios, though the same model was studied earlier by \cite{blum2006online} in the context of liquidity in clearing markets. Let $G=(V,E)$ be the underlying graph, initially \emph{completely unknown}. 
Each time step is either an arrival or a deadline of a vertex. Upon the arrival of a vertex, its incident edges to their previously-arrived neighbors are revealed. 
A vertex can be matched at any point until its \emph{deadline}, with this time revealed on its arrival.
Naturally, we assume the deadline of a vertex is after its arrival, and all edges incident to a vertex are revealed before its deadline.
This model generalizes the classic one-sided online bipartite matching model, where all offline vertices arrive at the beginning and have deadlines at the end, and every online vertex has its deadline right after its arrival.

For the fully online matching problem, \cite{huang2020fully} proved that \ranking achieves a tight $\Omega \approx 0.567$ (the unique solution to $\Omega \cdot e^{\Omega} = 1$) competitive ratio for bipartite graphs, and a competitive ratio of $0.521$ for general graphs.
For the fractional variant of the problem, \cite{huang2019tight} established a tight $2-\sqrt{2} \approx 0.585$ competitive ratio of \balance.
Remarkably, \ranking and \balance are known to be optimal in the classic model, but the claimed tightness here only applies to the two algorithms themselves. 
Indeed, \cite{huang2020fully2} introduced the \balancedranking algorithm that achieves a competitive ratio of $0.569$ for bipartite graphs, and the \eagerwf algorithm that achieves a competitive ratio  of $0.592$ for the fractional variant. The later result was further improved by \cite{tang2022improved} to $0.6$. On the negative side, the state-of-the-art upper bound (i.e., hardness) is $0.613$~\cite{huang2020fully,eckl2021stronger,tang2022improved}, separating the fully online model from the classic online bipartite matching model.

This setting is also known as the windowed online matching problem. \cite{ashlagi2023windowed} assumed a first-in-first-out structure on the active windows (i.e., arrivals and deadlines) of vertices, and achieved a $\nicefrac{1}{4}$ competitive ratio for edge-weighted graphs through a reduction to the Display Ads problem by suffering an extra factor of $2$. Combined with the state-of-the-art algorithm for the Display Ads problem by \cite{blanc2021multiway}, their competitive ratio can be improved to $0.268$.

\subsection{General Vertex Arrival}\label{sec:general}

Generalizing fully online matching is the online matching with general vertex arrivals problem, introduced by \cite{wang2015two}.
Again, the input is a graph $G=(V,E)$, initially unknown, with vertices arriving online. Upon the arrival of a vertex $v$, its incident edges to its previously-arrived neighbors are revealed. The algorithm either matches $v$ to an unmatched neighbor immediately or leaves $v$ unmatched, possibly matching it to a later-arriving neighbor $u$ upon $u$'s arrival. 
The inability to match vertices at any point before their departure (and lack of this information) makes this model more restrictive than the fully online model, and so algorithms for general vertex arrivals are also algorithms in the fully online model, with the same competitive ratio.

\cite{wang2015two} presented a fractional $0.526$-competitive algorithm for the fractional version of the problem. 
\cite{gamlath2019online} designed a rounding of Wang and Wong's fractional algorithm and established a $\nicefrac{1}{2}+\Omega(1)$ competitive ratio for the integral matching problem. This result stands as the only non-trivial integral algorithm so far. 
On the negative side, \cite{wang2015two,buchbinder2019edge,tang2022fractional} established an upper bound of $0.583$, separating the general vertex arrival model from the fully online model. 

\subsection{Edge Arrivals}

Finally, we remark that the most general online matching setting is the edge arrival model. That is, edges of an underlying graph arrive in a sequence and the algorithm decides whether to select an edge immediately on its arrival. 
Here edges correspond to fleeting collaboration opportunities between agents.
A competitive ratio of $\nicefrac12$ can be trivially achieved by a greedy algorithm that matches each edge on arrival if both its endpoints are free, and this is optimal for deterministic algorithms.
Unfortunately, \cite{gamlath2019online} proved that no online algorithm achieves a better than $\nicefrac12+\nicefrac{1}{2n}$ competitive ratio, even for the fractional version of the problem. 
Positive results are known assuming structure, including low-degree graphs and trees \cite{buchbinder2019edge}, batching \cite{lee2017maximum}, random-order arrivals \cite{guruganesh2017online}, or stochastic arrivals \cite{gravin2019online} (see \S\ref{sec:prophets} for more on the latter models).

\subsection{Reusable Resources}

In sponsored search, the advertisers’ budgets, viewed as resources, are non-reusable. In contrast, in such markets as cloud computing (e.g., AWS, Azure), short-term rentals (e.g., Airbnb), and freelancer labor (e.g., TaskRabbit), the allocated resources (be it compute, housing or labor) are reusable, and can be reallocated after being used.

The above motivates online bipartite matching  with reusable resources, where after an offline vertex (a rental service) is matched, it becomes available again after $d$ time steps, where $d$ is a known parameter that corresponds to the usage duration of the vertices. 
The classic online bipartite matching problem is a special case of the reusable resources model when $d = \infty$.

This model was first introduced by \cite{gong2022reusable} in a more general setting of online assortment optimization. \cite{goyal2021reusable,feng2021reusable} generalized the \balance algorithm and achieves an optimal $1-1/e$ competitive ratio for the fractional version of the problem.\footnote{Equivalently, they assumed that resources (offline vertices) have large capacities.}
For the integral version of the problem, \cite{delong2023reusable} proposed the \reranking algorithm (a variant of \ranking that reranks the offline vertices every $d$ time steps), and show that it achieves a competitive ratio of $0.589$, and an online correlated selection-based algorithm achieves a competitive ratio of $0.505$. All these results extend to the vertex-weighted setting. We remark that the results of \cite{delong2023reusable} heavily rely on the assumption that all vertices have identical usage durations $d$. The case of heterogeneous usage durations (i.e., each vertex $v$ has an individual duration time $d_v$) remains open.

\cite{feng2022reusable} further studied online assortment of reusable resources in the stochastic setting.
Reusable resources due to additional production have also been considered in infinite-horizon \emph{stochastic} settings \cite{aouad2020dynamic,collina2020dynamic,kessel2022stationary,patel2024combinatorial}. We discuss stochastic settings (without reusable resources) in more detail in the following section.

\section{Stochastic Models: Secretaries, Prophets, and Philosophers}\label{sec:prophets}

The preceding sections focused on adversarial models, where both input graph and arrival order are chosen by an adversary. 
This modeling choice, while robust, is quite pessimistic, and naturally results in worse guarantees than possibly achievable for real-world applications of interest.
A natural way to obtain improved provable guarantees is to either consider random arrival orders (but adversarial input), or to posit a stochastic generative model, possibly learnt from historical data.
Such models hearken back to classic results in optimal stopping theory concerning online Bayesian selection problems.

In the most basic setting, a buyer has a single item to sell, and impatient buyers arrive one after another and make take-it-or-leave-it bids for this single item. 
The buyer must select which bid to accept, immediately and irrevocably when the bid is made.
Under adversarial models, a buyer cannot be competitive with the hindsight-optimal solution.
In contrast, if the bids arrive in random order (referred to as the \emph{secretary problem}), then a competitive ratio of $\nicefrac{1}{e}$ is optimal \cite{dynkin1963optimum}. Similarly, if the successive bids $v_i$ are drawn independently
from known distributions $\calD_i$, then the optimal competitive ratio is $\nicefrac{1}{2}$ \cite{krengel1978semiamarts}, i.e., the buyer can guarantee an expected gain at least half of that obtained by a ``prophet'' who knows the realization of the randomness, 
$$\E[\textrm{Gain}]\geq \frac{1}{2}\cdot \E[\max_i v_i].$$
Such guarantees contrasting with the offline optimal, or prophet, are referred to as \emph{prophet inequalities}.
One may also contrast with the (computationally-unbounded) optimal online algorithm for such problems, which for reasons elaborated below we refer to as \emph{philosopher inequalities}.

These models have been generalized and extended significantly over the years.
In this section, we focus on recent developments for generalizations of the above to bipartite matching markets, where the buyer wishes to sell multiple heterogeneous items, and each arriving buyer proposes a different bid for each item. 
Put otherwise, we focus on online bipartite matching models.
We note that the buyer and seller terminology are not accidental, and these models have tight connections to questions in \emph{mechanism design}, which we also discuss in this section.

\begin{table}[ht]
	\centering
	\begin{tabular}{lcc}
		\toprule
		& impossibility & algorithmic \\
		\midrule
		Secretary  Matching & $\nicefrac{1}{e}$~\cite{dynkin1963optimum} & $\nicefrac{1}{e}$~\cite{kesselheim2013optimal} \\	
		{Prophet  Matching} & {$\nicefrac{1}{2}$~\cite{krengel1978semiamarts}} & $\nicefrac{1}{2}$~\cite{feldman2015combinatorial}\\
		Philosopher Matching & $0.99999$~\cite{papadimitriou2021online} & $0.652$~\cite{naor2023online} \\
		\bottomrule
	\end{tabular}	
	\caption{State-of-the-art for online bipartite weighted matching in stochastic settings.}
	\label{table:beyond-ads}
\end{table}

\subsection{Secretary Problems}

For edge-weighted online bipartite matching with online vertices arriving in random order, 
\cite{korula2009algorithms} were the first to obtain a constant-competitive ratio, specifically a $\nicefrac{1}{8}$-competitive algorithm. 
This was later improved to the optimal $\nicefrac{1}{e}$ ratio by \cite{kesselheim2013optimal}, generalizing the classic single-item result of \cite{dynkin1963optimum}, which we recall is the special case of a single offline vertex. 
For $k$ heterogeneous offline vertices, \cite{kleinberg2005multiple} showed  $(1-\nicefrac{1}{\sqrt{k}})$-competitive and truthful mechanism.
\cite{ezra2022general} study secretary matching in general graphs (with vertices arriving with edges to their previously-arrived neighbors, as in \S\ref{sec:fully} and \S\ref{sec:general}). They show that the optimal competitive ratio in this level of generality is $\nicefrac{5}{12}$, notably strictly greater than achievable for bipartite graphs.\footnote{For some intuition as to why this is not a contradiction, note that in the star example (i.e., the single-item problem), the center of the star arrives after the highest-bidding neighbor with probability $\nicefrac{1}{2}>\nicefrac{1}{e}$, and so greedily matching the center when it arrives is $\nicefrac{1
	}{2}$-competitive, and the lower bound of \cite{dynkin1963optimum} for bipartite graphs does not apply.}

The random-order model similarly allows for improved guarantees for the special cases of vertex-weighted and unweighted online bipartite matching.
For unweighted matching, \cite{guruganesh2017online} show that for \emph{edges} revealed in random order a better than $\nicefrac12$ competitive ratio is possible, notably beyond the worst-case optimal for adversarial arrivals \cite{gamlath2019online}.
Similarly, random-order vertex arrivals in bipartite graphs (with arrivals on only one side of the graph) allow one
 to surpass the worst-case optimal $1-\nicefrac1e$: a generalization of \textsc{ranking} achieves a competitive ratio of  $0.662$ \cite{huang2019online,jin2021improved}. 
This generalizes results of \cite{karande2011online,mahdian2011online}, who showed that for the unweighted problem \textsc{ranking} (unchanged) achieves competitiveness beyond $1-\nicefrac{1}{e}$, with the best known bound being $0.696$ \cite{mahdian2011online}.
As noted by these last two works, these results for random-order arrivals imply the same competitive ratios for stochastic matching problems with unknown i.i.d.~distributions over arrival types. This remains the best known result for unknown distributions.
In the following sections, we discuss the types of guarantees achievable under \emph{known} distributions.

\subsection{Prophet Inequalities}

The optimal competitive ratio of $\nicefrac12$ for the single-item problem due to \cite{krengel1978semiamarts} was also obtained several years later using a single-threshold (i.e., posted-price) algorithm by \cite{samuel1984comparison}. This results in truthful welfare-approximating mechanisms for single-item auctions. 
This connection between (pricing-based) prophet inequalities and mechanism design was later elaborated upon by researchers at the intersection of Economics and Computation \cite{hajiaghayi2007automated,chawla2010multi,kleinberg2019matroid}.
Interestingly, very recently
\cite{banihashem2024power} show that \emph{any} guarantee achieved by an online Bayesian selection algorithm can be achieved by a (dynamic) posted-price policy, implying that studying the non-strategic setting results in truthful mechanisms which achieve the same approximation of the social welfare as the algorithm in strategic settings.

The connection between (combinatorial) prophet inequalities and mechanisms design continues to motivate a flurry of results on prophet inequalities for increasingly involved markets, with more and more sophisticated combinatorial constraints on the sets of buyers that may be serviced, or items sold.
See the excellent surveys  \cite{hill1992survey,correa2019recent} and \cite{hartline2012approximation,lucier2017economic}for more on prophet inequalities and their connection to mechanism design, respectively. 
In what follows, we focus on prophet inequalities subject to matching constraints.

For unweighted online bipartite matching, \cite{feldman2009online} were the first to show that stochastic inputs allow for competitive ratio beyond the worst-case optimal $1-\nicefrac{1}{e}\approx 0.632$. Specifically, they show that if online vertices' neighborhoods are drawn i.i.d.~from a single known distribution, then a competitive ratio of $0.67$ is achievable. There has been a long line of work studying this question, most recently \cite{jaillet2013online,brubach2021improved,huang2021online,huang2022power,tang2022fractional}, with the current best competitive ratios being $0.7299$ and $0.716$ assuming integral and arbitrary arrival rates \cite{brubach2021improved,huang2022power}.\footnote{The arrival rate is the expected number of arrivals of a particular online type.}
For edge-weighted matching, a number of results were obtained under integral arrival rates \cite{haeupler2011online,brubach2021improved}, with the best ratio standing at $0.704$, while for arbitrary arrivals the ratio of $1-\nicefrac1e$ was only recently beaten \cite{yan2024edge,qiu2023improved}.
In contrast, by a work of  \cite{manshadi2012online}, no competitive ratios greater than $1-\nicefrac{1}{e^2}\approx 0.864$ and $0.823$ are possible in the same settings.

For unweighted and vertex-weighted bipartite matching under (much more general) time-varying independent distributions, \cite{tang2022fractional} recently provided the first algorithm surpassing the competitive ratio of $1-\nicefrac{1}{e}$, presenting a $0.666$-competitive algorithm.
For edge-weighted matching
a competitive ratio of $\nicefrac{1}{2}$ is best possible, as this generalizes the single-item problem of \cite{krengel1978semiamarts}. This ratio is known to be achievable via numerous approaches \cite{feldman2015combinatorial,dutting2020prophet,ezra2022prophet}.
This ratio of $\nicefrac{1}{2}$ is even achievable under vertex arrivals in general graphs \cite{ezra2022prophet}, or correlated arrivals~\cite{aouad2023nonparametric}. In contrast, the problem is strictly harder under edge arrivals, where the best known competitive ratio is in the range $[0.344,0.4]$ \cite{macrury2023random} and $[0.349,3/7]$ for bipartite graphs \cite{macrury2023random,correa2023optimal}. For unweighted matching a competitive ratio of $0.502$ is possible \cite{gravin2019online}.

\subsection{Philosopher Inequalities}

While a competitive ratio of $\nicefrac{1}{2}$ is worst-case optimal for online Bayesian selection subject to bipartite matching constraints, this is still a pessimistic worst-case guarantee, as the lower bounds focus on worst-case distributions. 
The optimal algorithms for distributions of interest may allow for better competitive ratios. 
This optimal algorithm, which is the solution of a Markov Decision Problem (MDP), is computable in polynomial space via standard techniques. 
As shown by \cite{papadimitriou2021online}, this is the right characterization, and even \emph{approximating} the optimal policy beyond some $0.999$ ratio is \textsc{pspace}-complete (i.e., is as hard as the hardest problems requiring polynomial space). Hence, under standard complexity-theoretic assumptions, this optimal policy is not computable in polynomial time. 
Put otherwise, it is likely computable only by a character with sufficient time to ``think'' (i.e., compute), therefore naturally referred to as a ``philosopher''. 
This motivates the study of polynomial-time approximation of the optimal online algorithm, which, in analogy with prophet inequalities (approximation of the optimal offline algorithm), we term \emph{philosopher inequalities}.

\cite{anari2019nearly} were the first to consider the approximation of the optimal online algorithm for online Bayesian selection. They considered bounded-depth and production-constrained laminar matroids, for which they provided $(1+\epsilon)$-approximate philosopher inequalities for any constant $\epsilon>0$.
\cite{dutting2023prophet} obtained the same bounds for random-order (secretary) philosopher inequalities for a single item.
For online bipartite matching, which by \cite{papadimitriou2021online} such an approximation would result in surprising developments in complexity theory. On the positive side, a successive line of work
\cite{papadimitriou2021online,saberi2021greedy,braverman2022max,naor2023online} showed that (increasingly) better than $\nicefrac{1}{2}$-approximate philosopher inequalities are possible. The current best bound stands at $0.652$ (notably, above the natural bound of $1-\nicefrac{1}{e}$ for online matching algorithms).
We note that the recent work of \cite{banihashem2024power} also translates the above (polynomial-time) policies approximating the optimal policy into pricing-based (and hence truthful) mechanisms providing the same approximation of the optimal mechanism.

\section{Overarching Techniques}\label{sec:techniques}

\subsection{Primal-Dual Algorithms}\label{sec:primal-dual}

The primal-dual method has found wide applications in the area of online algorithms. Refer to \cite{buchbinder2009design} for a comprehensive survey. For online matching and related problems, the primal-dual schema was first adapted by \cite{buchbinder2007online} to analyze \balance for the AdWords problem. We illustrate this idea in the special case of (unweighted) online bipartite matching.

We start with an economic interpretation of \balance. Consider the offline vertices as divisible items, and the online vertices as 0-1 unit-demand buyers. At any moment, each offline vertex $v$ \emph{prices} itself at $g(x_v)$ per (fractional) unit based on the current water level $x_v$ (matched fraction of $v$) and thus its neighbor receives a \emph{utility} of $1-g(x_v)$ per unit of $v$ (fractionally) assigned to it, where $g(\cdot)$ is an increasing function. (Note that the price of $v$ increases over time.) Upon its arrival, an online vertex $u$ continuously chooses the unmatched neighbors giving $u$ the largest utility.
Recall that the dual of the maximum matching problem is the minimum vertex cover problem. 
The economic interpretation suggests a natural way to set the dual variables: for each offline vertex, let its dual variable be the total collected price, and for each online vertex, let its dual variable be the utility.
The primal-dual framework asserts that in order to establish a $\Gamma$ competitive ratio, it suffices to prove that the total gain of each item-buyer pair is $\Gamma$. To illustrate this approach, we provide a formal yet brief analysis in Appendix~\ref{sec:lesson}.

Almost all fractional online matching algorithms were analyzed within the primal-dual framework. This includes algorithms for AdWords~\cite{buchbinder2007online}, Display Ads~\cite{devanur2016whole}, fully online matching~\cite{huang2019tight,huang2020fully2,tang2022improved}, general vertex arrivals~\cite{wang2015two}, stochastic matching~\cite{tang2022fractional}, etc.

The primal-dual method for fractional matching crucially requires the dual constraints to be satisfied \emph{always}. 
In contrast, \cite{devanur2013randomized} noticed that it suffices to have the dual constraints hold \emph{in expectation} for randomized algorithms, and used this observation to provide a simplified competitive analysis of \ranking for online (vertex-weighted) bipartite matching. Their approach is now referred to as the \emph{randomized} primal-dual schema.

Their proof relied on an intuitive economic interpretation of \ranking that is similar to the economic interpretation of \balance. Instead of maintaining a dynamic price that depends on the water level, each offline vertex sets a randomized fixed price (according to the random permutation generated by \ranking) at the beginning. Then on the arrival of each online vertex, it buys the cheapest remaining neighbor. Again, we split the gain of each matched edge between its two endpoints (i.e., set the corresponding dual variables), according to the price of the offline vertex and the utility of the online vertex. For our EC readers, refer to \cite{eden2021economics} for a proof that is written explicitly in the language of price and utility and avoids duality.

A remarkable property of the randomized primal-dual schema is its intrinsic robustness for vertex-weighted graphs for all variants of online bipartite matching. Indeed, this schema often (if not always) provides a ``free lunch'', allowing one
 to extend a result on unweighted graphs to vertex-weighted graphs while preserving the same competitive ratio. E.g., \cite{devanur2013randomized,huang2020online,huang2021online,huang2022power,tang2022fractional}.
Going beyond the online bipartite matching model,  in the fully online matching model \cite{huang2020fully} further developed the randomized primal-dual schema, by introducing a novel charging mechanic that allows a vertex other than the two endpoints of a matched edge to share the gain. \cite{levin2020streaming} further found an application of the randomized primal dual framework for submodular maximization.

\subsection{Randomized Rounding and Contention Resolution Schemes}

The relax-and-round framework considers fractional relaxations as guides for randomized algorithms' probabilistic choices. 
This section discusses the prevalence of this approach for online matching problems.

For bipartite matching, the standard relaxation allows us to assign a fractional value $x_e\in [0,1]$ to each edge so that any node $v$ has at most one unit assigned to its edges, $\sum_{e\ni v} x_e\leq 1$. 
Intuitively, $x_e$ can be thought of as the marginal matching probability of edge $e$ by some randomized algorithm, and use these fractions to obtain randomized algorithms.
Indeed, since every fractional bipartite matching is the convex combination of integral matchings $M_1,\dots,M_k$, this intuition can be made formal, by randomly picking one such matching $M_i$ with probability equal to its coefficient in the convex combination.
This results in each edge $e$ being matched with probability $x_e$, and thus preserves any linear objectives, $\sum_e w_e\cdot x_e$.
We refer to such rounding schemes matching each edge with probability $x_e$ as \emph{lossless} rounding schemes.

Perhaps surprisingly, and as hinted at by \Cref{table:central}, for \emph{online} edge-weighted bipartite matching and ad allocation problems, there exists a gap between our understanding of fractional algorithms and indivisible randomized algorithms.
Indeed, as pointed out by \cite[Footnote 3]{devanur2013randomized}, the above-mentioned integrality does not carry over to the online setting: 
for every randomized algorithm, there exist graphs on which the (optimally) $(1-\nicefrac1e)$-competitive fractional algorithm \balance achieves value $8/7$ times higher than any randomized algorithms. Therefore, rounding fractional algorithms seems to require losing a large multiplicative factor (in the worst case).
At face value, this large gap seems to rule out the use of the relax-and-round approach to obtain good randomized (integral) online matching algorithms.

Despite the above, a large number of results in online matching in recent years are obtained by (or can be interpreted as employing) online randomized rounding of fractional solutions, often obtained using the primal-dual schema, \S\ref{sec:primal-dual}. See \href{https://sites.google.com/view/focs23workshop-online-rounding/}{the FOCS23 workshop on the topic}. There are three flavors of results in this vein.

\subsubsection{Lossless rounding}
The Online Correlated Selection (OCS) technique, elaborated upon in \S\ref{sec:ocs} \cite{fahrbach2020edge,gao2021improved,blanc2021multiway} can be seen as losslessly rounding \emph{structured} fractional bipartite matchings in online settings.
\cite{buchbinder2023lossless} were the first to explicitly ask what structure is \emph{necessary} for lossless online rounding of bipartite matching, i.e., allowing one to match each edge 
 with probability \emph{exactly} $x_e$.
They considered other structured fractional online matching algorithms and provided lossless online rounding schemes for these, which they used to obtain generalizations of OCS and sharp randomness thresholds for beating deterministic algorithms for online bipartite matching.
Similarly, ``spread out'' fractional matchings, e.g., ones assigning value $1/\Delta$ in graphs of maximum degree $\Delta$, can be rounded \emph{nearly} losslessly, i.e., one can match each edge $e$ with probability $x_e\cdot (1-\epsilon)$, and this is key to numerous results for online edge coloring, e.g., \cite{cohen2018randomized,wajc2020matching,blikstad2024online}.

\subsubsection{Approximate Rounding}
In the other extreme, \cite{naor2023online} ask how well \emph{arbitrary} fractional matchings $\vec{x}$ can be rounded online, and provide approximate rounding schemes that match each edge $e$ with probability $0.652\cdot x_e$, notably breaking the barrier of $1-\nicefrac{1}{e}$ for this problem. They then use this scheme to obtain improved results for online edge coloring of multigraphs and philosopher inequalities, among others.
Some prior results for philosopher inequalities \cite{papadimitriou2021online,saberi2021greedy} are also obtained by such approximate online rounding schemes applied to LP relaxations incorporating constraints only applicable to online algorithms \cite{torrico2022dynamic,buchbinder2014secretary}.
Similarly, the multiway OCS of \cite{gao2021improved} can be seen (and used) as an approximate rounding scheme that provides guarantees per offline vertex, as opposed to per edge, lending itself to results for vertex-weighted matching.
The result of \cite{gamlath2019online} for online matching under general vertex arrivals likewise follows a lossy rounding approach (applied to the fractional algorithm of \cite{wang2015two}), though here the approximation guarantees are more global than per-vertex or per-edge.
More approximate rounding schemes are obtained by Online Contention Resolution Schemes (OCRS), whose guarantees are weak in the context of adversarial settings, but are central to prophet inequalities, as we now discuss.

\subsection{Online Contention Resolution Schemes}

Contention resolution schemes (CRS) have their origins in the (offline) submodular optimization literature \cite{chekuri2014submodular}, and follow a natural rounding approach: 
\emph{activate} each element independently with probability $x_e$, and then select a high-valued feasible subset (in our case, a matching) among the active elements. 
This is obtained by guaranteeing each active element be selected with as high a probability as possible. This probability $\Pr[e \textrm{ selected} \mid e \textrm{ active}]$ is referred to as the \emph{balance ratio} of the CRS.

The above approach can be generalized to online settings \cite{feldman2016online}, where inclusion of active element (in our case, edge) $e$ must be made immediately upon its activation.
By an approach due to \cite{yan2011mechanism}, using an appropriate convex \emph{ex-ante relaxation}, an OCRS with balance ratio $c$ provides $c$-approximate prophet inequalities for the same setting, by considering an element (buyer) active if their bid is in their $x_e$-th percentile.
As \cite{lee2018optimal} show, the opposite is true: prophet inequalities that are $c$-competitive with respect to this relaxation yield OCRS that are $c$-balanced.
By preceding discussions on pricing-based prophet inequalities being derivable from arbitrary prophet inequalities, we find that OCRS yield welfare-approximating truthful mechanisms \cite{banihashem2024power}.
Generally, OCRS have found widespread applications since their introduction. See \cite{patel2024combinatorial} for a discussion.

When specified to matchings, for batched OCRS (capturing vertex arrivals), balance ratios of $\nicefrac{1}{2}$ and $(1 + \nicefrac{1}{e^2})/2 \approx 0.567$ are optimal for adversarial vertex arrivals in general graphs  \cite{ezra2022prophet} and random-order vertex arrivals in bipartite graphs \cite{macrury2024random}, respectively. 
For other settings, despite much progress in recent years, the optimal balance ratio is still unknown for adversarial edge arrivals \cite{gravin2019prophet,correa2023optimal,macrury2023random}, random-order edge arrivals \cite{brubach2021improved,pollner2022improved,macrury2023random} and random-order vertex arrivals \cite{fu2021random,macrury2024random}.

\subsection{Online Rounding: Online Correlated Selection}
\label{sec:ocs}

In this section we elaborate on one particular general rounding scheme for online bipartite matching and its generalizations, termed 
Online Correlated Selection (OCS). 
In each round, the OCS observes the online item \emph{and a fractional allocation of this item} to the offline agents.
It must then allocate this item in whole to one of the agents.
We measure the quality of an OCS by how an agent's value for the allocated subset of items depends on the fractional allocation to the agent.

Consider unweighted matching as a running example.
We expect that an offline agent gets matched with a higher chance as the total fractional allocation to the agent increases.
Consider  the following natural baseline algorithm:
match each item $i$ to an unmatched agent $a$ interested in the item with probability proportional to $x_{ai}$, the fractional allocation of item $i$ to agent $a$.
Let $x_a$ denote the total fractional allocation to agent $a$.
This rounding algorithm guarantees that agent $a$ is matched with probability at least $1 - e^{-x_a}$, and this bound is tight (c.f., \cite{gao2021improved}).
Unfortunately, combining the Balance algorithm and this baseline bound only gives the trivial $0.5$ competitive ratio.
See \cite{gao2021improved} for the state-of-the-art techniques for designing OCS for unweighted and vertex-weighted matching, which improved the above bound for the probability of matching agent $a$, and the resulting competitive ratios for these two problems.
\cite{hosseini2023class} further applied the OCS technique to an unweighted online matching problem with class fairness as its main objective.

The definitions of OCS for display ads and AdWords are too technical to be covered concisely in this brief survey.
We refer readers to \cite{fahrbach2020edge} for the original definition of OCS for display ads and a proof of concept that non-trivial OCS exists, and \cite{blanc2021multiway} for the best existing result along this line.
See \cite{huang2020adwords} for the definition of OCS for AdWords;
the technique is similar to the proof of concept for display ads by \cite{fahrbach2020edge}.

Although the concept of OCS was first introduced for problems in non-stochastic models, the technique has found applications in stochastic models as well.
In hindsight, this is perhaps unsurprising. 
The general recipe for online matching algorithms in stochastic models is to first solve an LP relaxation and then make online matching decisions taking the LP solution as a guide.
By design, an OCS can treat the LP solution as a fractional allocation and convert it into online matching decisions.
\cite{tang2022fractional} used this approach to obtain the first non-trivial algorithm for the non-IID model of unweighted and vertex-weighted online stochastic matching.
\cite{huang2022power} further gave an OCS tailored for the stochastic model to get the best competitive ratios to date for IID unweighted and vertex-weighted online stochastic matching.

\subsection*{Acknowledgements}
Zhiyi Huang is supported by an NSFC grant 6212290003.
Zhihao Gavin Tang is supported by NSFC grant 61932002.
David Wajc is supported in part by a Taub Family Foundation ``Leader in Science and Technology'' fellowship.
We thank Niv Buchbinder for his short analysis of the \balance algorithm (in the appendix), and Sam Taggart for valuable feedback.

\appendix 

\section{ADDENDUM: A Teachable Moment}\label{sec:lesson}

Following a quote attributed to Feynman, namely ``If you want to master something, teach it,'' we present short and self-contained (and in our opinion, quite teachable) proofs of two basic results in online bipartite matching: a competitive ratio of $1-\nicefrac1e$ for \balance extended to vertex-weighted matching \cite{buchbinder2007online}, and an even shorter proof that no fractional algorithm can do better.

Recall that for the online vertex-weighted bipartite matching problem, each offline vertex $i$ and its positive weight $w_i\geq 0$ are known up front. At each time $t$, online vertex $t$ arrives, together with its edges $(i,t)\in E$ to its neighbors $i\in N(t)$, and we must decide to what extent $x_{i,t}$ to assign $t$ to its neighbor $i$, from which the algorithm accrues a value of $w_i\cdot x_{i,t}$. 
Both offline vertices and online vertices must be assigned to a total extent of at most one.

A linear programming (LP) relaxation of the problem (allowing us to match each edge $(i,t)$ to an extent of $x_{i,t}$), together with this LP's dual, are as follows.

\begin{align*}
    \textrm{(P)\,\,} \max\,\, & \sum_{(i,t)\in E} w_i\cdot x_{i,t}  & \textrm{(D)\,\,} \min\,\, & \sum_i y_i + \sum_t z_t \\
    \textrm{s.t.\,\,} & \sum_t x_{i,t} \leq 1 \quad \forall i & \textrm{s.t.\,\,} & y_i + z_t \geq w_i \quad \forall (i,t)\in E \\
        & \sum_i x_{i,t} \leq 1 \quad \forall t & &  y,z\geq 0  \\
        & x \geq 0 & & 
\end{align*}

\subsection{Short analysis of \balance}

\underline{The \balance algorithm:}
Initialize zero primal and dual solutions, $\vec{x}$ and $\vec{y},\vec{z}$. Let $g(x):=\frac{e^x-1}{e-1}$. 
For every online vertex $t$ on arrival, letting $x_i := \sum_{t'<t}x_{i,t'}$ be the fractional degree of neighboring vertex $i\in N(t)$ before arrival of $t$, we increase $x_{i,t}$ for all $i\in A,$ where
$$A:=\underset{i\in N(t)}{\arg\max}\{w_i\cdot (1-g(x_i+x_{i,t}))\},$$
so that this set $A$ grows monotonically, 
until $\sum_i x_{i,t}=1$ or $\min_{i\in N(t)} \big(x_i + x_{i,t}\big)=1$. 

Note that for the unweighted case ($w_i=1$ for all $i$), this algorithm is precisely the \wl algorithm described in \S\ref{sec:central}.
The intuition behind this algorithm is clear: since each offline vertex is equally likely to have no future neighbors, we wish to maximize the value assigned to edges of the least (fractionally) matched offline vertex, in case it (and it alone) has no future edges. That generalizing this approach allows to get a competitive ratio better than $1/2$ (also for vertex-weighted matching) is perhaps less immediate. We present a proof of this fact using LP duality, and specifically dual fitting.

\underline{Dual fitting:} 
For our analysis (only), for each offline vertex $i$ and online vertex $t$, we set dual values 
\begin{align*}
 y_i & \gets w_i\cdot g\left(\sum_{t'} x_{i,t'}\right),\\
 z_t & \gets \max_{i\in N(t)} \left(w_i \cdot \left(1-g\left(\sum_{t'\leq t} x_{i,t'}\right)\right)\right).
\end{align*}

The first step of any primal-dual-based proof involves showing that the constructed dual is feasible, and hence its cost upper bounds the maximum gain (in hindsight).
\begin{lemma}\label{WF:dual-feasible}
    Vectors $\vec{y},\vec{z}$ are dual feasible, i.e., they are positive and $y_i + z_t \geq w_i$ for all edges $(i,t)\in E$.
    Consequently, $\sum_i y_i + \sum_t z_t \geq OPT$, where $OPT$ is the weight of a maximum vertex-weighted matching.
\end{lemma}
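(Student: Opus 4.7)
The plan is to verify dual feasibility by unfolding the definitions of $\vec{y}, \vec{z}$ and exploiting monotonicity of $g$, then conclude via weak LP duality. There are really only two observations needed, so the proof should be quite short; the challenge is mostly notational, keeping careful track of which partial sum of $x_{i,t'}$ appears in each dual variable.

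First I would handle non-negativity. Since $g(x) = \frac{e^x-1}{e-1}$ is increasing with $g(0)=0$ and $g(1)=1$, we have $g(x) \in [0,1]$ for all $x\in[0,1]$. The fractional matching maintained by \balance satisfies $\sum_{t'} x_{i,t'} \leq 1$ and $\sum_{t'\le t} x_{i,t'}\le 1$, so $y_i = w_i\cdot g(\sum_{t'} x_{i,t'})\geq 0$ and every term in the maximum defining $z_t$ is non-negative, giving $z_t\ge 0$ (taking $z_t = 0$ by convention if $N(t)=\emptyset$, a case in which no edge constraint involves $t$).

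Next I would verify $y_i + z_t\ge w_i$ for each edge $(i,t)\in E$. Since $i \in N(t)$, the maximum defining $z_t$ is at least the term corresponding to this $i$, so
\[
z_t \;\ge\; w_i\cdot\Bigl(1 - g\Bigl(\textstyle\sum_{t'\le t} x_{i,t'}\Bigr)\Bigr).
\]
For $y_i$, since $x_{i,t'}\ge 0$ we have $\sum_{t'} x_{i,t'} \ge \sum_{t'\le t} x_{i,t'}$, and monotonicity of $g$ yields
\[
y_i \;=\; w_i\cdot g\Bigl(\textstyle\sum_{t'} x_{i,t'}\Bigr) \;\ge\; w_i\cdot g\Bigl(\textstyle\sum_{t'\le t} x_{i,t'}\Bigr).
\]
Adding these two inequalities makes the $g(\sum_{t'\le t} x_{i,t'})$ terms cancel, leaving $y_i + z_t \ge w_i$ as required.

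Finally, for the ``consequently'' clause, I would appeal to weak LP duality: since $(\vec y,\vec z)$ is feasible for (D), its objective value upper-bounds any feasible primal objective for (P). Because (P) is a standard relaxation of the integral vertex-weighted bipartite matching problem (any matching $M$ induces a feasible primal $x$ with the same objective), we obtain $\sum_i y_i + \sum_t z_t \ge \mathrm{OPT}_{\mathrm{LP}} \ge \mathrm{OPT}$. No real obstacle here — the work is entirely in the first two steps, and those reduce to a one-line telescoping once the definitions are lined up.
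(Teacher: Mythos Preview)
Your proposal is correct and mirrors the paper's proof essentially step for step: non-negativity from $g:[0,1]\to[0,1]$, the constraint $y_i+z_t\ge w_i$ from bounding $z_t$ below by the $i$-th term in the max and bounding $y_i$ below via monotonicity of $g$ so the $g(\sum_{t'\le t}x_{i,t'})$ terms cancel, and the conclusion via weak LP duality plus (P) being a relaxation.
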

\begin{proof}
As $g:[0,1]\to [0,1]$, we have that $y_i,z_t\geq 0$.
On the other hand, by monotonicity of $g$ (and positivity of $\vec{x}$), for every edge $(i,t)\in E$ we have that $y_i = w_i\cdot g(\sum_{t'}x_{i,t'}) \geq w_i\cdot g(\sum_{t'\leq t}x_{i,t'})$, and so  
\begin{align*}
y_i + z_t & \geq w_i\cdot g\left(\sum_{t'\leq t}x_{i,t'}\right) + w_i\cdot \left(1-g\left(\sum_{t'\leq t}x_{i,t'}\right)\right) = w_i.
\end{align*}
The lower bound $\sum_i y_i + \sum_t z_t \geq OPT$ then follows from weak LP duality, together with the primal LP being a fractional relaxation of the problem.
\end{proof}

The second step in a primal-dual proof involves bounding the ratio of the primal and dual solutions' gain/cost. For online algorithms this typically boils down to bounding the ratio of the change in these values per cost, as in the following.
\begin{lemma}\label{WF:PD}
For each time $t$, the increase in primal value, $\Delta P_t := \sum_i w_i\cdot x_{i,t}$, and the increase in the dual cost, $\Delta D_t := z_t + \sum_i \left(w_i\cdot g(x_i+x_{i,t}) - w_i\cdot g(x_i)\right)$, satisfy $$\Delta P_t /\Delta D_t \geq 1-\nicefrac{1}{e}.$$
\end{lemma}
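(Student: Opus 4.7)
The plan is to analyze the continuous flow of water during the processing of online vertex $t$ infinitesimally. Parameterize the process by $s\in[0,L]$, where $L:=\sum_i x_{i,t}\leq 1$ is the total water delivered to $t$, and let $x_i(s):=x_i+x_{i,t}(s)$ denote the fractional degree of offline vertex $i$ at parameter $s$. Let $A(s)\subseteq N(t)$ denote the active (argmax) set at parameter $s$, let $\beta_i(s):=\dot x_i(s)\geq 0$ be the corresponding flow rates (so $\sum_{i\in A(s)}\beta_i(s)=1$ whenever water is flowing), and let $\alpha(s):=\max_{i\in N(t)} w_i\bigl(1-g(x_i(s))\bigr)$ be the current maximum utility. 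Two structural facts will drive the proof: (i) each non-saturated $i\in A(s)$ satisfies $w_i\,g(x_i(s))=w_i-\alpha(s)$ by the defining equation of $A(s)$; and (ii) $\alpha(s)$ is monotonically non-increasing in $s$, with terminal value $\alpha(L)=z_t$.

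The core calculation is a pointwise identity relating the growth rate of $\sum_i y_i$ to the primal growth rate $\dot P_t(s):=\sum_{i\in A(s)}w_i\beta_i(s)$. The key algebraic fact is that $g(x)=(e^x-1)/(e-1)$ satisfies $g'(x)=g(x)+\tfrac{1}{e-1}$. Combining this with structural fact (i), I would derive
\begin{align*}
\frac{d}{ds}\sum_i y_i &= \sum_{i\in A(s)} w_i\,g'(x_i(s))\,\beta_i(s) \\
&= \sum_{i\in A(s)}\Bigl(w_i-\alpha(s)+\tfrac{w_i}{e-1}\Bigr)\beta_i(s) \\
&= \tfrac{e}{e-1}\,\dot P_t(s)-\alpha(s),
\end{align*}
where the last step uses $\sum_{i\in A(s)}\beta_i(s)=1$. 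Integrating over $s\in[0,L]$ yields $\sum_i\Delta y_i=\tfrac{e}{e-1}\Delta P_t-\int_0^L\alpha(s)\,ds$, so that $\Delta D_t=z_t+\sum_i\Delta y_i=\tfrac{e}{e-1}\Delta P_t+z_t-\int_0^L\alpha(s)\,ds$.

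To conclude, it remains to show $\int_0^L\alpha(s)\,ds\geq z_t$, which I plan to do by splitting into the two stopping regimes. If $L=1$, fact (ii) gives $\int_0^1\alpha(s)\,ds\geq\alpha(1)=z_t$. If $L<1$, the stopping rule forces $x_i(L)=1$ for every $i\in N(t)$, whence $g(x_i(L))=g(1)=1$ and therefore $z_t=0\leq\int_0^L\alpha(s)\,ds$. In both cases $\Delta D_t\leq \tfrac{e}{e-1}\Delta P_t$, equivalently $\Delta P_t\geq (1-\nicefrac{1}{e})\,\Delta D_t$.

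The main obstacle I anticipate is reconciling the lump-sum terminal contribution $z_t$ in $\Delta D_t$ with the integrated contribution from $\sum_i\Delta y_i$. The resolution, and the reason behind choosing $g(x)=(e^x-1)/(e-1)$ in the first place, is the identity $g'-g\equiv 1/(e-1)$: it makes the dual-growth calculation expose $\alpha(s)$ as the subtracted ``correction'' term, so that the monotonically decreasing $\alpha$ absorbs the terminal $z_t=\alpha(L)$ upon integration (or is forced to zero by the saturated stopping condition).
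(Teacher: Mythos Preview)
Your proof is correct and rests on the same engine as the paper's---the identity $g'(x)=g(x)+\tfrac{1}{e-1}$ together with the two stopping regimes---but you execute it differently. The paper argues in one shot at the end of step $t$: it applies the convexity bound $g(x_i+x_{i,t})-g(x_i)\le x_{i,t}\cdot g'(x_i+x_{i,t})$, expands $g'$ via the identity, and then uses that every $i$ with $x_{i,t}>0$ lies in the \emph{final} argmax, so $w_i(1-g(x_i+x_{i,t}))$ coincides with $z_t$ termwise; the $z_t$ contribution then cancels against the $g$-terms directly. You instead track the process infinitesimally, obtaining the exact equality $\tfrac{d}{ds}\sum_i y_i=\tfrac{e}{e-1}\dot P_t(s)-\alpha(s)$ and absorbing $z_t$ afterward via the integral bound $\int_0^L\alpha\ge \alpha(L)\cdot L$ (or $z_t=0$ in the saturated case). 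Your route is a shade tighter (you get an identity for the $y$-increase and isolate where slack enters), and it makes transparent why monotonicity of $\alpha$ is what pays for the lump-sum $z_t$; the paper's route is shorter and avoids formalizing the continuous flow, trading the integral argument for a single convexity inequality evaluated at the terminal state.
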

\begin{proof}
The increase in dual cost satisfies
\begin{align*}
    \Delta D_t & = \sum_i w_i\cdot \left(g(x_i + x_{i,t}) - g(x_i)\right) + z_t \\
    & \leq \sum_i w_i\cdot x_{i,t} \cdot \left(g(x_i+x_{i,t}) + \frac{1}{e-1}\right) + z_t \\ 
    & = \sum_i w_i\cdot x_{i,t} \cdot \left(g(x_i+x_{i,t}) + \frac{1}{e-1}\right) + \max_i w_i\cdot (1-g(x_i+x_{i,t})) \\ 
    & = \sum_i w_i\cdot x_{i,t} \cdot \left(g(x_i+x_{i,t}) + \frac{1}{e-1}\right) + \sum_i x_{i,t} \cdot \max_j w_j\cdot (1-g(x_j+x_{j,t})) \\ 
    & = \sum_i w_i\cdot x_{i,t} \cdot \left(g(x_i+x_{i,t}) + \frac{1}{e-1}\right) + \sum_i x_{i,t} \cdot w_i\cdot (1-g(x_i+x_{i,t})) \\ 
    & = \sum_i w_i\cdot x_{i,t} \cdot \left(1 + \frac{1}{e-1}\right) \\
    & = \Delta P_t \cdot \frac{e}{e-1}.
\end{align*}
Above, the single inequality relied on the definition of $g(x) = \frac{e^x-1}{e-1}$ implying that $g'(x) = g(x) + \frac{1}{e-1}$ is monotone increasing in $x$. The second and third equalities follow by definition of $z_t$ and either $\sum_i x_{i,t}=1$ or $\min_{i\in N(t)} \big(x_i + x_{i,t}\big)=1$, where the latter implies that $g(x_i+x_{i,t})=1$ for all $i\in N(t)$. The fourth equality follows from $x_{i,t}=0$ unless $i\in \underset{i\in N(t)}{\arg\max}\{w_i\cdot (1-g(x_i+x_{i,t}))\}.$
\end{proof}

Combining both lemmas, the algorithm's competitive ratio follows.
\begin{theorem}
Algorithm \balance is $(1-\nicefrac{1}{e})$-competitive for vertex-weighted online bipartite matching.
\end{theorem}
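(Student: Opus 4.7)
The proof plan is to simply combine the two lemmas by summing the per-step primal-dual inequality across all arrivals, and then invoke weak duality. Concretely, the first step is to recognize that the total value $ALG$ produced by \balance is exactly $\sum_t \Delta P_t = \sum_{(i,t)\in E} w_i\cdot x_{i,t}$, since each arriving online vertex contributes $\sum_i w_i\cdot x_{i,t}$ to the objective.

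The second step is to verify that $\sum_t \Delta D_t$ equals the final dual cost $\sum_i y_i + \sum_t z_t$. The $z_t$ terms telescope trivially (each appears in exactly one $\Delta D_t$), and the sum $\sum_t \bigl(w_i\cdot g(x_i+x_{i,t}) - w_i\cdot g(x_i)\bigr)$ telescopes to $w_i \cdot g\bigl(\sum_{t'}x_{i,t'}\bigr) = y_i$, using $g(0)=0$ and the fact that $x_i$ at the start of round $t$ equals the cumulative fractional assignment from previous rounds.

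Given these two observations, \Cref{WF:PD} immediately yields $ALG = \sum_t \Delta P_t \geq (1-\nicefrac{1}{e})\cdot \sum_t \Delta D_t = (1-\nicefrac{1}{e})\cdot \bigl(\sum_i y_i + \sum_t z_t\bigr)$. Then \Cref{WF:dual-feasible} together with weak LP duality gives $\sum_i y_i + \sum_t z_t \geq OPT$, and chaining the two inequalities yields $ALG \geq (1-\nicefrac{1}{e})\cdot OPT$, as desired.

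I expect no real obstacle here, since the two lemmas have already absorbed all of the work (the feasibility check and the delicate per-step estimate that uses convexity of $g$). The only thing worth being careful about is the telescoping identification $\sum_t \Delta D_t = \sum_i y_i + \sum_t z_t$, which relies on $g$ being evaluated at the \emph{final} fractional load $\sum_{t'} x_{i,t'}$ in the definition of $y_i$; this is a one-line observation rather than a genuine difficulty.
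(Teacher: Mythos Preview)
Your proposal is correct and essentially identical to the paper's proof: sum the per-step inequality from \Cref{WF:PD} over all $t$, identify $\sum_t \Delta D_t$ with the final dual cost via the telescoping $\sum_t \bigl(g(\sum_{t'\le t}x_{i,t'})-g(\sum_{t'<t}x_{i,t'})\bigr)=g(\sum_{t'}x_{i,t'})$ (using $g(0)=0$), and then invoke \Cref{WF:dual-feasible}. The paper writes this as a single chain of (in)equalities, but the content is exactly what you describe.
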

\begin{proof}
    Summing over all times $t$, we have by  Lemmas \ref{WF:PD} and \ref{WF:dual-feasible} that 
    \begin{align*}
        \sum_{i,t} w_i\cdot x_{i,t} = \sum_t \Delta P_t \geq (1-\nicefrac{1}{e})\cdot \sum_t \Delta D_t = \sum_i w_i \cdot g\left(\sum_{t'} x_{i,t'}\right) + \sum_t z_t \geq OPT,
    \end{align*}
    where the last equality uses the telescoping sum 
    \begin{align*}
        \sum_t \left( g\left(\sum_{t'\leq t}x_{i,t'}\right) - g\left(\sum_{t'< t}x_{i,t'}\right)\right) & = g\left(\sum_{t'}x_{i,t'}\right)-g(0)=g\left(\sum_{t'}x_{i,t'}\right),
    \end{align*}
    where the last equality follows from $g(0)=0$.
\end{proof}

\subsection{Matching Impossibility Result}

We now show that \balance is optimal, up to vanishingly small lower-order terms, even for the special case of \emph{unweighted} online bipartite matching.
The underlying input family is the same as that used by \cite{karp1990optimal} to prove that randomized algorithms are at best $(1-\nicefrac1e)$-competitive. The following proof shows in a more direct manner that this ratio is best possible even for (potentially more powerful) fractional algorithms.
The intuition behind the proof is precisely that guiding \wl described above.

\begin{theorem}\label{thm:adversarial_lb}
Any fractional online bipartite (unweighted) matching algorithm $\mathcal{A}$ has competitive ratio at most $1-\nicefrac{1}{e}+o(1)$. 
\end{theorem}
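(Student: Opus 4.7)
The plan is to use the classical ``upper-triangular'' instance with a random labeling: take $n$ offline vertices $[n]$, sample $\pi$ uniformly from the symmetric group $S_n$, and let online vertex $t$'s neighborhood be $L_t(\pi) := [n]\setminus\{\pi(1),\ldots,\pi(t-1)\}$, which has size $n-t+1$. The offline optimum is always $n$ (match online $t$ to offline $\pi(t)$), so it suffices to show $\E_\pi[V(\mathcal{A},\pi)] \leq n(1-\nicefrac{1}{e})+o(n)$; averaging then yields a specific $\pi$ witnessing the desired upper bound on the competitive ratio.

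The first step is to reduce to a symmetric algorithm. Given $\mathcal{A}$, define $\mathcal{A}'$ by composing $\mathcal{A}$ with a uniformly random relabeling $\sigma\in S_n$ of the offline vertices: $\mathcal{A}'$ runs $\mathcal{A}$ on the $\sigma$-relabeled instance and un-relabels the allocation. For any fixed $\pi$ the $\sigma$-relabeled input is uniformly distributed over $S_n$, so $\E_\sigma[V(\mathcal{A}',\pi)] = \E_\pi[V(\mathcal{A},\pi)]$. Moreover, by symmetry of $\sigma$ the averaged allocation $\tilde{x}_{i,t}(\pi) := \E_\sigma[x^{\mathcal{A}'}_{i,t}(\pi,\sigma)]$ takes a common value $r_t$ for every $i\in L_t(\pi)$ and zero otherwise, and this $r_t$ depends only on $t$ (the distribution of $\mathcal{A}$'s input and of $\sigma(i)$ up to step $t$ is determined by $t$ alone). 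Since $\{x^{\mathcal{A}'}_{i,t}(\pi,\sigma)\}_{i,t}$ is a fractional matching for every $(\pi,\sigma)$, by linearity so is $\tilde x(\pi)$: the online constraint yields $(n-t+1)\,r_t \leq 1$, and the offline constraint, applied to $i=\pi(n)$ (which stays live through all $n$ rounds), yields $\sum_{t=1}^n r_t \leq 1$.

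The second step reduces bounding $\E_\pi[V(\mathcal{A},\pi)] = \sum_t r_t\,(n-t+1)$ to the LP
\[
\max \sum_{t=1}^n b_t \quad\text{s.t.}\quad \sum_{t=1}^n \frac{b_t}{n-t+1}\leq 1,\quad b_t\in[0,1],
\]
via the substitution $b_t := r_t(n-t+1)$. Because the coefficients $\nicefrac{1}{n-t+1}$ are increasing in $t$, a one-step exchange argument shows the LP optimum sets $b_t=1$ for $t\leq T$ and $b_t=0$ otherwise, where $T$ is the largest integer with $\sum_{t=1}^T \nicefrac{1}{n-t+1} = H_n - H_{n-T}\leq 1$. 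Using $H_n - H_{n-T} = \ln\frac{n}{n-T} + O(\nicefrac{1}{n-T})$, we obtain $T \leq n(1-\nicefrac{1}{e}) + O(1)$, hence $\E_\pi[V(\mathcal{A},\pi)]\leq n(1-\nicefrac{1}{e})+o(n)$, yielding the claimed competitive ratio.

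The main obstacle is the symmetrization step: one must carefully verify that $\tilde x(\pi)$ is label-symmetric with $r_t$ a function of $t$ only, and that it inherits the pointwise offline fractional-matching constraint (applied, crucially, to the vertex $i=\pi(n)$ that survives all $n$ rounds, not just its expectation which would give the trivial bound $V \leq n$). Once this is in place, the LP bound itself is a one-liner using the standard harmonic estimate $H_n - H_{n/e} = 1 + O(\nicefrac{1}{n})$.
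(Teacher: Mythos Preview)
Your proof is correct, but it takes a genuinely different route from the paper's.

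The paper uses an \emph{adaptive} adversary: at each step $t$, it removes from the surviving neighborhood a vertex $i$ whose current fractional degree $\sum_{t'\le t}x_{i,t'}$ is at most the average among survivors. A short induction then shows that every offline vertex still present after step $t$ carries fractional degree at least $\min\{1,\sum_{t'=1}^{t}\tfrac{1}{n-t'+1}\}$, so by step $t\approx n(1-\nicefrac{1}{e})$ all survivors are saturated and no further value can be accrued. This is a two-line argument with no probability and no LP.

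Your approach instead fixes an \emph{oblivious} distribution (uniform $\pi$), symmetrizes the algorithm by a random relabeling $\sigma$, and uses the resulting label-invariance to collapse the allocation to a single sequence $(r_t)$ constrained by $(n-t+1)r_t\le 1$ and $\sum_t r_t\le 1$; the harmonic/LP bound then finishes. The symmetrization step is sound: conditioning on $(\sigma\pi(1),\dots,\sigma\pi(t-1))$, the value $\sigma(i)$ for any $i\in L_t(\pi)$ is uniform on the complement, which is exactly what makes $r_t$ independent of both $i$ and $\pi$; and since $\tilde x(\pi)$ is a convex combination of fractional matchings, applying the offline constraint at $i=\pi(n)$ legitimately yields $\sum_t r_t\le 1$.

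What each buys: the paper's adaptive argument is shorter and more self-contained for fractional (hence deterministic) algorithms. Your argument is slightly heavier but establishes the bound against a fixed input distribution, which via Yao's principle immediately yields the same $1-\nicefrac{1}{e}+o(1)$ upper bound for \emph{randomized integral} algorithms as well---something the adaptive construction does not give without an extra step.
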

\begin{proof}
We consider inputs consisting of $n$ offline and online vertices. The first online vertex neighbors all offline vertices; each following online vertex has the same neighborhood as its predecessor, barring one vertex. (Under appropriate labeling the ``bipartite'' adjacency matrix of this graph is upper triangular.) This graph clearly has a perfect matching, and therefore optimum value of $n$.
To prove our theorem we show that for any fractional algorithm $\mathcal{A}$, a judicious choice of input forces $\mathcal{A}$ to achieve value at most $n(1-\nicefrac{1}{e})+1$.

For every online vertex $t$, the neighbor $i$ with no future edges is chosen to be a neighbor which was matched below the average, e.g., which minimizes $\sum_{t'\leq t} x_{i,t'}$. 
As the $t$-th online vertex to arrive neighbors $n-t+1$ offline vertices, a proof by induction shows that offline vertices that do neighbor online vertex $t+1$ are matched before time $t+1$ to an average of at least $\sum_{t'\leq t}x_{i,t'}\geq \min\{1,\sum_{t'=1}^{t}\frac{1}{n-t'+1}\}$.\footnote{This holds only if the algorithm $\mathcal{A}$ is greedy, in the sense that it exhausts every online vertex $t$, i.e., sets $\sum_i x_{i,t}=1$, when feasible. However, it is easy to modify any algorithm with a greedy algorithm which does no worse.} Consequently, for $t=n(1-\nicefrac{1}{e})+1$, every offline vertex $i$ that neighbors $t+1$ is fractionally matched to an extent of at least one, since
\[
\sum_{t'=1}^{n(1-\nicefrac{1}{e})+1}\frac{1}{n-t'+1}=\sum_{i=\nicefrac{n}{e}}^{n}\frac{1}{i}\geq \int_{\nicefrac{n}{e}}^{n+1}\frac{1}{x}\  dx=\ln(n+1)-\ln(\nicefrac{n}{e}) \geq 1.
\]
But since $\sum_{t'} x_{i,t'}\leq 1$, this implies that every offline vertex $i$ neighboring online vertices after time $t$ is already fully (fractionally) matched by time $t$, and so the algorithm gains no further profit from such $i$.
Consequently,  $\mathcal{A}$ achieves a value of at most $\sum_{i,t'} x_{i,t'}\leq n(1-\nicefrac{1}{e})+1$, and so has competitive ratio at most $1-\nicefrac{1}{e}+\nicefrac{1}{n}$.
\end{proof}

See \cite{feige2018tighter} for tight bounds on the $o(1)$ term in the above upper bound.

\bibliographystyle{alpha}
\bibliography{refs}

\newcommand{\etalchar}[1]{$^{#1}$}
\begin{thebibliography}{FMMM09}

\bibitem[ABD{\etalchar{+}}23]{ashlagi2023windowed}
Itai Ashlagi, Maximilien Burq, Chinmoy Dutta, Patrick Jaillet, Amin Saberi, and
  Chris Sholley.
\newblock Edge-weighted online windowed matching.
\newblock {\em Mathematics of Operations Research}, 48(2):999--1016, 2023.

\bibitem[AGKM11]{aggarwal2011online}
Gagan Aggarwal, Gagan Goel, Chinmay Karande, and Aranyak Mehta.
\newblock Online vertex-weighted bipartite matching and single-bid budgeted
  allocations.
\newblock In {\em Proceedings of the 22nd Annual ACM-SIAM Symposium on Discrete
  Algorithms (SODA)}, pages 1253--1264, 2011.

\bibitem[AL06]{azar2006maximizing}
Yossi Azar and Arik Litichevskey.
\newblock Maximizing throughput in multi-queue switches.
\newblock {\em Algorithmica}, 45(1):69--90, 2006.

\bibitem[AM23]{aouad2023nonparametric}
Ali Aouad and Will Ma.
\newblock A nonparametric framework for online stochastic matching with
  correlated arrivals.
\newblock In {\em Proceedings of the 24th ACM Conference on Economics and
  Computation (EC)}, pages 114--114, 2023.

\bibitem[ANSS19]{anari2019nearly}
Nima Anari, Rad Niazadeh, Amin Saberi, and Ali Shameli.
\newblock Nearly optimal pricing algorithms for production constrained and
  laminar bayesian selection.
\newblock In {\em Proceedings of the 20th ACM Conference on Economics and
  Computation (EC)}, pages 91--92, 2019.

\bibitem[AS20]{aouad2020dynamic}
Ali Aouad and {\"O}mer Sarita{\c{c}}.
\newblock Dynamic stochastic matching under limited time.
\newblock In {\em Proceedings of the 21st ACM Conference on Economics and
  Computation (EC)}, pages 789--790, 2020.

\bibitem[BC21]{blanc2021multiway}
Guy Blanc and Moses Charikar.
\newblock Multiway online correlated selection.
\newblock In {\em Proceedings of the 62nd Symposium on Foundations of Computer
  Science (FOCS)}, pages 1277--1284, 2021.

\bibitem[BDL22]{braverman2022max}
Mark Braverman, Mahsa Derakhshan, and Antonio~Molina Lovett.
\newblock Max-weight online stochastic matching: Improved approximations
  against the online benchmark.
\newblock In {\em Proceedings of the 23rd ACM Conference on Economics and
  Computation (EC)}, pages 967--985, 2022.

\bibitem[BGMS21]{brubach2021improved}
Brian Brubach, Nathaniel Grammel, Will Ma, and Aravind Srinivasan.
\newblock Improved guarantees for offline stochastic matching via new ordered
  contention resolution schemes.
\newblock In {\em Proceedings of the 35th Annual Conference on Neural
  Information Processing Systems (NeurIPS)}, 2021.

\bibitem[BHK{\etalchar{+}}24]{banihashem2024power}
Kiarash Banihashem, MohammadTaghi Hajiaghayi, Dariusz~R Kowalski, Piotr Krysta,
  and Jan Olkowski.
\newblock Power of posted-price mechanisms for prophet inequalities.
\newblock In {\em Proceedings of the 35th Annual ACM-SIAM Symposium on Discrete
  Algorithms (SODA)}, pages 4580--4604, 2024.

\bibitem[BJN07]{buchbinder2007online}
Niv Buchbinder, Kamal Jain, and Joseph~(Seffi) Naor.
\newblock Online primal-dual algorithms for maximizing ad-auctions revenue.
\newblock In {\em Proceedings of the 15th Annual European Symposium on
  Algorithms (ESA)}, pages 253--264. 2007.

\bibitem[BJS14]{buchbinder2014secretary}
Niv Buchbinder, Kamal Jain, and Mohit Singh.
\newblock Secretary problems via linear programming.
\newblock {\em Mathematics of Operations Research}, 39(1):190--206, 2014.

\bibitem[BN09]{buchbinder2009design}
Niv Buchbinder and Joseph~(Seffi) Naor.
\newblock The design of competitive online algorithms via a primal: dual
  approach.
\newblock {\em Foundations and Trends{\textregistered} in Theoretical Computer
  Science (TCS)}, 3(2--3):93--263, 2009.

\bibitem[BNW23]{buchbinder2023lossless}
Niv Buchbinder, Joseph Naor, and David Wajc.
\newblock Lossless online rounding for online bipartite matching (despite its
  impossibility).
\newblock In {\em Proceedings of the 34th Annual ACM-SIAM Symposium on Discrete
  Algorithms (SODA)}, pages 2030--2068, 2023.

\bibitem[BST19]{buchbinder2019edge}
Niv Buchbinder, Danny Segev, and Yevgeny Tkach.
\newblock Online algorithms for maximum cardinality matching with edge
  arrivals.
\newblock {\em Algorithmica}, 81(5):1781--1799, 2019.

\bibitem[BSVW24]{blikstad2024online}
Joakim Blikstad, Ola Svensson, Radu Vintan, and David Wajc.
\newblock Online edge coloring is (nearly) as easy as offline.
\newblock In {\em Proceedings of the 56th Annual ACM Symposium on Theory of
  Computing (STOC)}, 2024.
\newblock To Appear.

\bibitem[BSZ06]{blum2006online}
Avrim Blum, Tuomas Sandholm, and Martin Zinkevich.
\newblock Online algorithms for market clearing.
\newblock {\em Journal of the ACM (JACM)}, 53(5):845--879, 2006.

\bibitem[CCF{\etalchar{+}}23]{correa2023optimal}
Jos{\'e} Correa, Andr{\'e}s Cristi, Andr{\'e}s Fielbaum, Tristan Pollner, and
  S~Matthew Weinberg.
\newblock Optimal item pricing in online combinatorial auctions.
\newblock {\em Mathematical Programming}, pages 1--32, 2023.

\bibitem[CFH{\etalchar{+}}18]{correa2019recent}
Jos{\'{e}} Correa, Patricio Foncea, Ruben Hoeksma, Tim Oosterwijk, and Tjark
  Vredeveld.
\newblock Recent developments in prophet inequalities.
\newblock {\em SIGecom Exchanges}, 17(1):61--70, 2018.

\bibitem[CHMS10]{chawla2010multi}
Shuchi Chawla, Jason~D Hartline, David~L Malec, and Balasubramanian Sivan.
\newblock Multi-parameter mechanism design and sequential posted pricing.
\newblock In {\em Proceedings of the 42nd Annual ACM Symposium on Theory of
  Computing (STOC)}, pages 311--320, 2010.

\bibitem[CILB{\etalchar{+}}20]{collina2020dynamic}
Natalie Collina, Nicole Immorlica, Kevin Leyton-Brown, Brendan Lucier, and Neil
  Newman.
\newblock Dynamic weighted matching with heterogeneous arrival and departure
  rates.
\newblock In {\em Proceedings of the 16th Conference on Web and Internet
  Economics (WINE)}, pages 17--30, 2020.

\bibitem[CVZ14]{chekuri2014submodular}
Chandra Chekuri, Jan Vondr{\'a}k, and Rico Zenklusen.
\newblock Submodular function maximization via the multilinear relaxation and
  contention resolution schemes.
\newblock {\em SIAM Journal on Computing (SICOMP)}, 43(6):1831--1879, 2014.

\bibitem[CW18]{cohen2018randomized}
Ilan~Reuven Cohen and David Wajc.
\newblock Randomized online matching in regular graphs.
\newblock In {\em Proceedings of the 29th Annual ACM-SIAM Symposium on Discrete
  Algorithms (SODA)}, pages 960--979, 2018.

\bibitem[DFKL20]{dutting2020prophet}
Paul D\"utting, Michal Feldman, Thomas Kesselheim, and Brendan Lucier.
\newblock Prophet inequalities made easy: Stochastic optimization by pricing
  nonstochastic inputs.
\newblock {\em SIAM Journal on Computing (SICOMP)}, 49(3):540--582, 2020.

\bibitem[DFN{\etalchar{+}}23]{delong2023reusable}
Steven Delong, Alireza Farhadi, Rad Niazadeh, Balasubramanian Sivan, and Rajan
  Udwani.
\newblock Online bipartite matching with reusable resources.
\newblock {\em Mathematics of Opererations Resesearch}, 2023.

\bibitem[DGR{\etalchar{+}}23]{dutting2023prophet}
Paul D{\"u}tting, Evangelia Gergatsouli, Rojin Rezvan, Yifeng Teng, and
  Alexandros Tsigonias-Dimitriadis.
\newblock Prophet secretary against the online optimal.
\newblock In {\em Proceedings of the 24th ACM Conference on Economics and
  Computation (EC)}, pages 561--581, 2023.

\bibitem[DHK{\etalchar{+}}16]{devanur2016whole}
Nikhil~R. Devanur, Zhiyi Huang, Nitish Korula, Vahab~S. Mirrokni, and Qiqi Yan.
\newblock Whole-page optimization and submodular welfare maximization with
  online bidders.
\newblock {\em {ACM} Trans. Economics and Comput.}, 4(3):14:1--14:20, 2016.

\bibitem[DJK13]{devanur2013randomized}
Nikhil~R Devanur, Kamal Jain, and Robert~D Kleinberg.
\newblock Randomized primal-dual analysis of ranking for online bipartite
  matching.
\newblock In {\em Proceedings of the 24th Annual ACM-SIAM Symposium on Discrete
  Algorithms (SODA)}, pages 101--107, 2013.

\bibitem[DM22]{devanur2022online}
Nikhil Devanur and Aranyak Mehta.
\newblock Online matching in advertisement auctions.
\newblock In {\em Online and Matching-Based Market Design}. Cambridge
  University Press, 2022.

\bibitem[Dyn63]{dynkin1963optimum}
Evgenii~Borisovich Dynkin.
\newblock The optimum choice of the instant for stopping a markov process.
\newblock {\em Soviet Mathematics}, 4:627--629, 1963.

\bibitem[EFFS21]{eden2021economics}
Alon Eden, Michal Feldman, Amos Fiat, and Kineret Segal.
\newblock An economics-based analysis of ranking for online bipartite matching.
\newblock In {\em Proceedings of the 4th Symposium on Simplicity in Algorithms
  (SOSA)}, pages 107--110, 2021.

\bibitem[EFGT22a]{ezra2022general}
Tomer Ezra, Michal Feldman, Nick Gravin, and Zhihao~Gavin Tang.
\newblock General graphs are easier than bipartite graphs: Tight bounds for
  secretary matching.
\newblock In {\em Proceedings of the 23rd ACM Conference on Economics and
  Computation (EC)}, pages 1148--1177, 2022.

\bibitem[EFGT22b]{ezra2022prophet}
Tomer Ezra, Michal Feldman, Nick Gravin, and Zhihao~Gavin Tang.
\newblock Prophet matching with general arrivals.
\newblock {\em Mathematics of Operations Research}, 47(2):878--898, 2022.

\bibitem[EIV23]{echenique2023online}
Federico Echenique, Nicole Immorlica, and Vijay~V Vazirani.
\newblock {\em Online and matching-based market design}.
\newblock Cambridge University Press, 2023.

\bibitem[EKLS21]{eckl2021stronger}
Alexander Eckl, Anja Kirschbaum, Marilena Leichter, and Kevin Schewior.
\newblock A stronger impossibility for fully online matching.
\newblock {\em Operations Research Letters}, 49(5):802--808, 2021.

\bibitem[Fei18]{feige2018tighter}
Uriel Feige.
\newblock Tighter bounds for online bipartite matching.
\newblock {\em arXiv preprint arXiv:1812.11774}, 2018.

\bibitem[FGL15]{feldman2015combinatorial}
Michal Feldman, Nick Gravin, and Brendan Lucier.
\newblock Combinatorial auctions via posted prices.
\newblock In {\em Proceedings of the 26th Annual ACM-SIAM Symposium on Discrete
  Algorithms (SODA)}, pages 123--135, 2015.

\bibitem[FHTZ22]{fahrbach2020edge}
Matthew Fahrbach, Zhiyi Huang, Runzhou Tao, and Morteza Zadimoghaddam.
\newblock Edge-weighted online bipartite matching.
\newblock {\em Journal of the ACM (JACM)}, 69(6):1--35, 2022.

\bibitem[FKM{\etalchar{+}}09]{feldman2009online2}
Jon Feldman, Nitish Korula, Vahab Mirrokni, S~Muthukrishnan, and Martin
  P{\'a}l.
\newblock Online ad assignment with free disposal.
\newblock In {\em Proceedings of the 5th Conference on Web and Internet
  Economics (WINE)}, pages 374--385, 2009.

\bibitem[FMMM09]{feldman2009online}
Jon Feldman, Aranyak Mehta, Vahab Mirrokni, and S~Muthukrishnan.
\newblock Online stochastic matching: Beating 1-1/e.
\newblock In {\em Proceedings of the 50th Symposium on Foundations of Computer
  Science (FOCS)}, pages 117--126, 2009.

\bibitem[FNS21]{feng2021reusable}
Yiding Feng, Rad Niazadeh, and Amin Saberi.
\newblock Robustness of online inventory balancing algorithm to inventory
  shocks.
\newblock {\em Available at SSRN: https://ssrn.com/abstract=3795056}, 2021.

\bibitem[FNS22]{feng2022reusable}
Yiding Feng, Rad Niazadeh, and Amin Saberi.
\newblock Near-optimal bayesian online assortment of reusable resources.
\newblock In {\em Proceedings of the 23rd ACM Conference on Economics and
  Computation (EC)}, pages 964--965, 2022.

\bibitem[FSZ16]{feldman2016online}
Moran Feldman, Ola Svensson, and Rico Zenklusen.
\newblock Online contention resolution schemes.
\newblock In {\em Proceedings of the 27th Annual ACM-SIAM Symposium on Discrete
  Algorithms (SODA)}, pages 1014--1033, 2016.

\bibitem[FTW{\etalchar{+}}21]{fu2021random}
Hu~Fu, Zhihao~Gavin Tang, Hongxun Wu, Jinzhao Wu, and Qianfan Zhang.
\newblock Random order vertex arrival contention resolution schemes for
  matching, with applications.
\newblock In {\em Proceedings of the 48th International Colloquium on Automata,
  Languages and Programming (ICALP)}, 2021.

\bibitem[GGI{\etalchar{+}}22]{gong2022reusable}
Xiao{-}Yue Gong, Vineet Goyal, Garud~N. Iyengar, David Simchi{-}Levi, Rajan
  Udwani, and Shuangyu Wang.
\newblock Online assortment optimization with reusable resources.
\newblock {\em Management Science}, 68(7):4772--4785, 2022.

\bibitem[GHH{\etalchar{+}}21]{gao2021improved}
Ruiquan Gao, Zhongtian He, Zhiyi Huang, Zipei Nie, Bijun Yuan, and Yan Zhong.
\newblock Improved online correlated selection.
\newblock In {\em Proceedings of the 62nd Symposium on Foundations of Computer
  Science (FOCS)}, pages 1265--1276, 2021.

\bibitem[GIU21]{goyal2021reusable}
Vineet Goyal, Garud Iyengar, and Rajan Udwani.
\newblock Asymptotically optimal competitive ratio for online allocation of
  reusable resources.
\newblock In {\em Proceedings of the 17th Conference on Web and Internet
  Economics (WINE)}, page 543, 2021.

\bibitem[GKM{\etalchar{+}}19]{gamlath2019online}
Buddhima Gamlath, Michael Kapralov, Andreas Maggiori, Ola Svensson, and David
  Wajc.
\newblock Online matching with general arrivals.
\newblock In {\em Proceedings of the 60th Symposium on Foundations of Computer
  Science (FOCS)}, pages 26--38, 2019.

\bibitem[GS17]{guruganesh2017online}
Guru~Prashanth Guruganesh and Sahil Singla.
\newblock Online matroid intersection: Beating half for random arrival.
\newblock In {\em Proceedings of the 19th Conference on Integer Programming and
  Combinatorial Optimization (IPCO)}, pages 241--253, 2017.

\bibitem[GTW21]{gravin2019online}
Nick Gravin, Zhihao~Gavin Tang, and Kangning Wang.
\newblock Online stochastic matching with edge arrivals.
\newblock In {\em Proceedings of the 48th International Colloquium on Automata,
  Languages and Programming (ICALP)}, pages 74:1--74:20, 2021.

\bibitem[GU23]{goyal2023online}
Vineet Goyal and Rajan Udwani.
\newblock Online matching with stochastic rewards: optimal competitive ratio
  via path-based formulation.
\newblock {\em Operations Research}, 71(2):563--580, 2023.

\bibitem[GW19]{gravin2019prophet}
Nikolai Gravin and Hongao Wang.
\newblock Prophet inequality for bipartite matching: Merits of being simple and
  non adaptive.
\newblock In {\em Proceedings of the 20th ACM Conference on Economics and
  Computation (EC)}, pages 93--109, 2019.

\bibitem[Har12]{hartline2012approximation}
Jason~D Hartline.
\newblock Approximation in mechanism design.
\newblock {\em American Economic Review}, 102(3):330--36, 2012.

\bibitem[HHIS23]{hosseini2023class}
Hadi Hosseini, Zhiyi Huang, Ayumi Igarashi, and Nisarg Shah.
\newblock Class fairness in online matching.
\newblock In {\em Proceedings of the AAAI Conference on Artificial
  Intelligence}, volume~37, pages 5673--5680, 2023.

\bibitem[HJS{\etalchar{+}}23]{huang2023online}
Zhiyi Huang, Hanrui Jiang, Aocheng Shen, Junkai Song, Zhiang Wu, and Qiankun
  Zhang.
\newblock Online matching with stochastic rewards: Advanced analyses using
  configuration linear programs.
\newblock In {\em Proceedings of the 20th Conference on Web and Internet
  Economics (WINE)}, pages 384--401, 2023.

\bibitem[HK92]{hill1992survey}
Theodore~P Hill and Robert~P Kertz.
\newblock A survey of prophet inequalities in optimal stopping theory.
\newblock {\em Contemporary Mathematics}, 125:191--207, 1992.

\bibitem[HKS07]{hajiaghayi2007automated}
Mohammad~Taghi Hajiaghayi, Robert Kleinberg, and Tuomas Sandholm.
\newblock Automated online mechanism design and prophet inequalities.
\newblock In {\em Proceedings of the 22nd AAAI Conference on Artificial
  Intelligence (AAAI)}, pages 58--65, 2007.

\bibitem[HKT{\etalchar{+}}20]{huang2020fully}
Zhiyi Huang, Ning Kang, Zhihao~Gavin Tang, Xiaowei Wu, Yuhao Zhang, and Xue
  Zhu.
\newblock Fully online matching.
\newblock {\em Journal of the ACM (JACM)}, 67(3):1--25, 2020.

\bibitem[HMZ11]{haeupler2011online}
Bernhard Haeupler, Vahab~S Mirrokni, and Morteza Zadimoghaddam.
\newblock Online stochastic weighted matching: Improved approximation
  algorithms.
\newblock In {\em Proceedings of the 7th Conference on Web and Internet
  Economics (WINE)}, pages 170--181, 2011.

\bibitem[HPT{\etalchar{+}}19]{huang2019tight}
Zhiyi Huang, Binghui Peng, Zhihao~Gavin Tang, Runzhou Tao, Xiaowei Wu, and
  Yuhao Zhang.
\newblock Tight competitive ratios of classic matching algorithms in the fully
  online model.
\newblock In {\em Proceedings of the 30th Annual ACM-SIAM Symposium on Discrete
  Algorithms (SODA)}, pages 2875--2886, 2019.

\bibitem[HS21]{huang2021online}
Zhiyi Huang and Xinkai Shu.
\newblock Online stochastic matching, poisson arrivals, and the natural linear
  program.
\newblock In {\em Proceedings of the 53rd Annual ACM Symposium on Theory of
  Computing (STOC)}, pages 682--693, 2021.

\bibitem[HSY22]{huang2022power}
Zhiyi Huang, Xinkai Shu, and Shuyi Yan.
\newblock The power of multiple choices in online stochastic matching.
\newblock In {\em Proceedings of the 54th Annual ACM Symposium on Theory of
  Computing (STOC)}, pages 91--103, 2022.

\bibitem[HTWZ19]{huang2019online}
Zhiyi Huang, Zhihao~Gavin Tang, Xiaowei Wu, and Yuhao Zhang.
\newblock Online vertex-weighted bipartite matching: Beating 1-1/e with random
  arrivals.
\newblock {\em ACM Transactions on Algorithms (TALG)}, 15(3):1--15, 2019.

\bibitem[HTWZ20]{huang2020fully2}
Zhiyi Huang, Zhihao~Gavin Tang, Xiaowei Wu, and Yuhao Zhang.
\newblock Fully online matching ii: Beating ranking and water-filling.
\newblock In {\em Proceedings of the 61st Symposium on Foundations of Computer
  Science (FOCS)}, pages 1380--1391, 2020.

\bibitem[HZ20]{huang2020online}
Zhiyi Huang and Qiankun Zhang.
\newblock Online primal dual meets online matching with stochastic rewards:
  Configuration lp to the rescue.
\newblock In {\em Proceedings of the 52nd Annual ACM Symposium on Theory of
  Computing (STOC)}, pages 1153--1164, 2020.

\bibitem[HZZ20]{huang2020adwords}
Zhiyi Huang, Qiankun Zhang, and Yuhao Zhang.
\newblock Adwords in a panorama.
\newblock In {\em Proceedings of the 61st Symposium on Foundations of Computer
  Science (FOCS)}, pages 1416--1426, 2020.

\bibitem[JL13]{jaillet2013online}
Patrick Jaillet and Xin Lu.
\newblock Online stochastic matching: New algorithms with better bounds.
\newblock {\em Mathematics of Operations Research}, 2013.

\bibitem[JW21]{jin2021improved}
Billy Jin and David~P Williamson.
\newblock Improved analysis of ranking for online vertex-weighted bipartite
  matching in the random order model.
\newblock In {\em Proceedings of the 17th Conference on Web and Internet
  Economics (WINE)}, pages 207--225, 2021.

\bibitem[Kle05]{kleinberg2005multiple}
Robert~D Kleinberg.
\newblock A multiple-choice secretary algorithm with applications to online
  auctions.
\newblock In {\em Proceedings of the 16th Annual ACM-SIAM Symposium on Discrete
  Algorithms (SODA)}, pages 630--631, 2005.

\bibitem[KMT11]{karande2011online}
Chinmay Karande, Aranyak Mehta, and Pushkar Tripathi.
\newblock Online bipartite matching with unknown distributions.
\newblock In {\em Proceedings of the 43rd Annual ACM Symposium on Theory of
  Computing (STOC)}, pages 587--596, 2011.

\bibitem[KP00]{kalyanasundaram2000optimal}
Bala Kalyanasundaram and Kirk~R Pruhs.
\newblock An optimal deterministic algorithm for online $b$-matching.
\newblock {\em Theoretical Computer Science (TCS)}, 233(1):319--325, 2000.

\bibitem[KP09]{korula2009algorithms}
Nitish Korula and Martin P{\'a}l.
\newblock Algorithms for secretary problems on graphs and hypergraphs.
\newblock In {\em Proceedings of the 36th International Colloquium on Automata,
  Languages and Programming (ICALP)}, pages 508--520, 2009.

\bibitem[KPV13]{kapralov2013online}
Michael Kapralov, Ian Post, and Jan Vondr{\'a}k.
\newblock Online submodular welfare maximization: Greedy is optimal.
\newblock In {\em Proceedings of the 24th Annual ACM-SIAM Symposium on Discrete
  Algorithms (SODA)}, pages 1216--1225, 2013.

\bibitem[KRTV13]{kesselheim2013optimal}
Thomas Kesselheim, Klaus Radke, Andreas T{\"o}nnis, and Berthold V{\"o}cking.
\newblock An optimal online algorithm for weighted bipartite matching and
  extensions to combinatorial auctions.
\newblock In {\em Proceedings of the 21st Annual European Symposium on
  Algorithms (ESA)}, pages 589--600, 2013.

\bibitem[KS78]{krengel1978semiamarts}
Ulrich Krengel and Louis Sucheston.
\newblock On semiamarts, amarts, and processes with finite value.
\newblock {\em Probability on Banach spaces}, 4:197--266, 1978.

\bibitem[KSSW22]{kessel2022stationary}
Kristen Kessel, Ali Shameli, Amin Saberi, and David Wajc.
\newblock The stationary prophet inequality problem.
\newblock In {\em Proceedings of the 23rd ACM Conference on Economics and
  Computation (EC)}, pages 243--244, 2022.

\bibitem[KVV90]{karp1990optimal}
Richard~M Karp, Umesh~V Vazirani, and Vijay~V Vazirani.
\newblock An optimal algorithm for on-line bipartite matching.
\newblock In {\em Proceedings of the 22nd Annual ACM Symposium on Theory of
  Computing (STOC)}, pages 352--358, 1990.

\bibitem[KW19]{kleinberg2019matroid}
Robert Kleinberg and S~Matthew Weinberg.
\newblock Matroid prophet inequalities and applications to multi-dimensional
  mechanism design.
\newblock {\em Games and Economic Behavior}, 113:97--115, 2019.

\bibitem[LS17]{lee2017maximum}
Euiwoong Lee and Sahil Singla.
\newblock Maximum matching in the online batch-arrival model.
\newblock In {\em Proceedings of the 19th Conference on Integer Programming and
  Combinatorial Optimization (IPCO)}, pages 355--367, 2017.

\bibitem[LS18]{lee2018optimal}
Euiwoong Lee and Sahil Singla.
\newblock Optimal online contention resolution schemes via ex-ante prophet
  inequalities.
\newblock In {\em Proceedings of the 26th Annual European Symposium on
  Algorithms (ESA)}, pages 57:1--57:14, 2018.

\bibitem[LTX{\etalchar{+}}23]{liang2023perturbation}
Jingxun Liang, Zhihao~Gavin Tang, Yixuan~Even Xu, Yuhao Zhang, and Renfei Zhou.
\newblock On the perturbation function of ranking and balance for weighted
  online bipartite matching.
\newblock In {\em Proceedings of the 31st Annual European Symposium on
  Algorithms (ESA)}, pages 80:1--80:15, 2023.

\bibitem[Luc17]{lucier2017economic}
Brendan Lucier.
\newblock An economic view of prophet inequalities.
\newblock {\em ACM SIGecom Exchanges}, 16(1):24--47, 2017.

\bibitem[LW21]{levin2020streaming}
Roie Levin and David Wajc.
\newblock Streaming submodular matching meets the primal-dual method.
\newblock In {\em Proceedings of the 32nd Annual ACM-SIAM Symposium on Discrete
  Algorithms (SODA)}, pages 1914--1933, 2021.

\bibitem[Meh13]{mehta2013online}
Aranyak Mehta.
\newblock Online matching and ad allocation.
\newblock {\em Foundations and Trends{\textregistered} in Theoretical Computer
  Science}, 8(4):265--368, 2013.

\bibitem[MGS12]{manshadi2012online}
Vahideh~H Manshadi, Shayan~Oveis Gharan, and Amin Saberi.
\newblock Online stochastic matching: Online actions based on offline
  statistics.
\newblock {\em Mathematics of Operations Research}, 37(4):559--573, 2012.

\bibitem[MM24]{macrury2024random}
Calum MacRury and Will Ma.
\newblock Random-order contention resolution via continuous induction:
  Tightness for bipartite matching under vertex arrivals.
\newblock In {\em Proceedings of the 56th Annual ACM Symposium on Theory of
  Computing (STOC)}, page To Appear, 2024.

\bibitem[MMG23]{macrury2023random}
Calum MacRury, Will Ma, and Nathaniel Grammel.
\newblock On (random-order) online contention resolution schemes for the
  matching polytope of (bipartite) graphs.
\newblock In {\em Proceedings of the 34th Annual ACM-SIAM Symposium on Discrete
  Algorithms (SODA)}, pages 1995--2014. SIAM, 2023.

\bibitem[MP12]{mehta2012online}
Aranyak Mehta and Debmalya Panigrahi.
\newblock Online matching with stochastic rewards.
\newblock In {\em Proceedings of the 53rd Symposium on Foundations of Computer
  Science (FOCS)}, pages 728--737, 2012.

\bibitem[MSVV07]{mehta2007adwords}
Aranyak Mehta, Amin Saberi, Umesh Vazirani, and Vijay Vazirani.
\newblock Adwords and generalized online matching.
\newblock {\em Journal of the ACM (JACM)}, 54(5):22, 2007.

\bibitem[MWZ14]{mehta2014online}
Aranyak Mehta, Bo~Waggoner, and Morteza Zadimoghaddam.
\newblock Online stochastic matching with unequal probabilities.
\newblock In {\em Proceedings of the 26th Annual ACM-SIAM Symposium on Discrete
  Algorithms (SODA)}, pages 1388--1404, 2014.

\bibitem[MY11]{mahdian2011online}
Mohammad Mahdian and Qiqi Yan.
\newblock Online bipartite matching with random arrivals: an approach based on
  strongly factor-revealing lps.
\newblock In {\em Proceedings of the 43rd Annual ACM Symposium on Theory of
  Computing (STOC)}, pages 597--606, 2011.

\bibitem[NSW23]{naor2023online}
Joseph~(Seffi) Naor, Aravind Srinivasan, and David Wajc.
\newblock Online dependent rounding schemes.
\newblock {\em arXiv preprint arXiv:2301.08680}, 2023.

\bibitem[PPSW21]{papadimitriou2021online}
Christos Papadimitriou, Tristan Pollner, Amin Saberi, and David Wajc.
\newblock Online stochastic max-weight bipartite matching: Beyond prophet
  inequalities.
\newblock In {\em Proceedings of the 22nd ACM Conference on Economics and
  Computation (EC)}, pages 763--764, 2021.

\bibitem[PRSW22]{pollner2022improved}
Tristan Pollner, Mohammad Roghani, Amin Saberi, and David Wajc.
\newblock Improved online contention resolution for matchings and applications
  to the gig economy.
\newblock In {\em Proceedings of the 23rd ACM Conference on Economics and
  Computation (EC)}, 2022.

\bibitem[PW24]{patel2024combinatorial}
Neel Patel and David Wajc.
\newblock Combinatorial stationary prophet inequalities.
\newblock In {\em Proceedings of the 35th Annual ACM-SIAM Symposium on Discrete
  Algorithms (SODA)}, pages 4605--4630, 2024.

\bibitem[QFZW23]{qiu2023improved}
Guoliang Qiu, Yilong Feng, Shengwei Zhou, and Xiaowei Wu.
\newblock Improved competitive ratio for edge-weighted online stochastic
  matching.
\newblock In {\em Proceedings of the 20th International Conference on Web and
  Internet Economics (WINE)}, pages 527--544. Springer, 2023.

\bibitem[SA21]{shin2021making}
Yongho Shin and Hyung~Chan An.
\newblock Making three out of two: three-way online correlated selection.
\newblock In {\em Proceedings of the 32nd Annual International Symposium on
  Algorithms and Computation (ISAAC)}, page~49, 2021.

\bibitem[SC84]{samuel1984comparison}
Ester Samuel-Cahn.
\newblock Comparison of threshold stop rules and maximum for independent
  nonnegative random variables.
\newblock {\em the Annals of Probability}, 12(4):1213--1216, 1984.

\bibitem[SW21]{saberi2021greedy}
Amin Saberi and David Wajc.
\newblock The greedy algorithm is not optimal for online edge coloring.
\newblock In {\em 48th International Colloquium on Automata, Languages, and
  Programming (ICALP 2021)}, pages 109:1--109:18, 2021.

\bibitem[TT22]{torrico2022dynamic}
Alfredo Torrico and Alejandro Toriello.
\newblock Dynamic relaxations for online bipartite matching.
\newblock {\em INFORMS Journal on Computing}, 34(4):1871--1884, 2022.

\bibitem[TWW22]{tang2022fractional}
Zhihao~Gavin Tang, Jinzhao Wu, and Hongxun Wu.
\newblock (fractional) online stochastic matching via fine-grained offline
  statistics.
\newblock In {\em Proceedings of the 54th Annual ACM Symposium on Theory of
  Computing (STOC)}, pages 77--90, 2022.

\bibitem[TZ22]{tang2022improved}
Zhihao~Gavin Tang and Yuhao Zhang.
\newblock Improved bounds for fractional online matching problems.
\newblock {\em arXiv preprint arXiv:2202.02948}, 2022.

\bibitem[Udw23]{udwani2023adwords}
Rajan Udwani.
\newblock Adwords with unknown budgets and beyond.
\newblock In {\em Proceedings of the 24th ACM Conference on Economics and
  Computation (EC)}, pages 1128--1128, 2023.

\bibitem[Vaz23]{vazirani2023towards}
Vijay~V Vazirani.
\newblock Towards a practical, budget-oblivious algorithm for the adwords
  problem under small bids.
\newblock In {\em Proceedings of the 43rd Foundations of Software Technology
  and Theoretical Computer Science (FSTTCS)}, 2023.

\bibitem[Waj20]{wajc2020matching}
David Wajc.
\newblock {\em Matching Theory Under Uncertainty}.
\newblock PhD thesis, Carnegie Mellon University, 2020.

\bibitem[WW15]{wang2015two}
Yajun Wang and Sam Chiu-wai Wong.
\newblock Two-sided online bipartite matching and vertex cover: Beating the
  greedy algorithm.
\newblock In {\em Proceedings of the 42nd International Colloquium on Automata,
  Languages and Programming (ICALP)}, pages 1070--1081, 2015.

\bibitem[Yan11]{yan2011mechanism}
Qiqi Yan.
\newblock Mechanism design via correlation gap.
\newblock In {\em Proceedings of the 22nd Annual ACM-SIAM Symposium on Discrete
  Algorithms (SODA)}, pages 710--719, 2011.

\bibitem[Yan24]{yan2024edge}
Shuyi Yan.
\newblock Edge-weighted online stochastic matching: Beating.
\newblock In {\em Proceedings of the 35th Annual ACM-SIAM Symposium on Discrete
  Algorithms (SODA)}, pages 4631--4640, 2024.

\end{thebibliography}

\end{document}